\newtheorem{theorem}{Theorem}
\newtheorem{fact}{Fact}
\newtheorem{remark}{Remark}
\newtheorem{lemma}{Lemma}
\newtheorem{corollary}{Corollary}
\newcommand{\numelec}{n}
\newcommand{\nummodes}{m}
\DeclarePairedDelimiter{\abs}{\lvert}{\rvert}
\DeclarePairedDelimiter{\norm}{\lVert}{\rVert}
\DeclarePairedDelimiterXPP\trace[1]{\Tr}{[}{]}{}{#1}
\DeclareMathOperator{\Tr}{Tr}
\DeclareMathOperator{\var}{var}
\DeclareMathOperator{\diag}{diag}
\DeclareMathOperator{\pfaff}{Pf}
\newcommand\expectation[2][]{\expectationover*{#1}{#2}}
\DeclarePairedDelimiterXPP\expectationover[2]{\mathbb E_{#1}}{[}{]}{}{#2}
\DeclareMathOperator{\perm}{Perm}
\DeclareMathOperator{\perfmatch}{PerfMatch}
\newcommand{\estimate}[1]{\hat{#1}}
\DeclareMathOperator{\bin}{bin}
\DeclareMathOperator{\seq}{seq}
\DeclareMathOperator{\binx}{binx}
\DeclareMathOperator{\seqx}{seqx}
\DeclareMathOperator{\seqxz}{seqxz}
\DeclareMathOperator{\sign}{sgn}
\author{Bryan O'Gorman
\thanks{IBM Quantum, IBM T.J. Watson Research Center, Yorktown Heights, NY, USA}
}
\title{Fermionic tomography and learning}
\begin{document}
\begin{refsection}

\maketitle

\begin{abstract}
Shadow tomography via classical shadows is a state-of-the-art approach for estimating properties of a quantum state.
We present a simplified, combinatorial analysis of a recently proposed instantiation of this approach based on the ensemble of unitaries that are both fermionic Gaussian and Clifford.
Using this analysis, we derive a corrected expression for the variance of the estimator.
We then show how this leads to efficient estimation protocols for the fidelity with a pure fermionic Gaussian state (provably) and for 
an $X$-like operator of the form $(\ket{\mathbf 0}\bra{\psi} + \ket{\psi} \bra{\mathbf 0})$ (via numerical evidence).
We also construct much smaller ensembles of measurement bases that yield the exact same quantum channel, which may help with compilation.
We use these tools to show that an $\numelec$-electron, $\nummodes$-mode Slater determinant can be learned to within $\epsilon$ fidelity given $O(\numelec^2 \nummodes^7 \log(\nummodes / \delta) / \epsilon^2)$
samples of the Slater determinant.
\end{abstract}
 \section{Introduction}

Simulating quantum states of electrons is one of the most promising applications of quantum computing to chemistry and physics.
Fundamental to any quantum algorithm is a subroutine that extracts information from the quantum state.
Often this takes the form of estimating the expected values of some properties of the state.
Shadow tomography via classical shadows~\cite{huang2020predicting,elben2022randomized} is a recently proposed framework for doing so that uses randomized measurements to build a ``classical shadow'' from which the expected values of properties can be estimated.
In addition to the improved sample complexity, which is in some cases provably optimal, classical shadows also have the advantage that the measurements taken are independent of the specific properties to be estimated.

For many applications, everything of interest about the state is captured by the few-body reduced density matrices (RDMs).
Zhao et al.~\cite{zhao2021fermionic} applied the classical shadow approach to fermionic systems, showing how to additively estimate all $k$-RDMs of an $\nummodes$-mode state using $O\left(\binom{\nummodes}{k} k^{3/2} \log \nummodes\right)$ samples.
The ensemble of measurement bases they use is the set of all unitaries that are both affine (Clifford) and matchgate (fermionic Gaussian), which we accordingly call the affine-matchgate ensemble.
The channel defined by this affine-matchgate ensemble is the foundation of the present work.
We derive an expression for the second moment of a general operator in a general state, which in turn bounds the shadow norm of a general operator and thus the variance of the estimator thereof.

One such application is the learning of a Slater determinant, which is uniquely defined by its 1-RDMs.
We rigorously bound the sample complexity of doing so, translating the error in the estimated 1-RDMs into the fidelity between the learned and target Slater determinants.

\paragraph{Classical shadows}

The shadow tomography via classical shadows approach is to select an ensemble $\mathcal U$ of unitaries and measure in a basis $\hat{U}$ chosen uniformly at random from the ensemble. 
Doing so yields the channel
\begin{align}
\mathcal M_{\mathcal U}(\rho)
&= 
\expectation{
\hat{U}^{\dagger} \ket{\hat{\mathbf b}} \bra{\hat{\mathbf b}} \hat{U}
}
=
\frac{1}{\abs{\mathcal U}}
\sum_{U \in \mathcal U}
\sum_{\mathbf b \in {\{0, 1\}}^{\nummodes}}
\trace{\rho U^{\dagger} \ket{\mathbf b} \bra{\mathbf b} U}
{U}^{\dagger} \ket{{\mathbf b}} \bra{{\mathbf b}} {U}.
\end{align}
Importantly, the ensemble $\mathcal U$ is chosen so that the channel $\mathcal M_{\mathcal U}$ is invertible, in which case
\begin{align}
\rho &= 
\expectation{
\mathcal M^{-1}_{\mathcal U}
\left(
\hat{U}^{\dagger} \ket{\hat{\mathbf b}} \bra{\hat{\mathbf b}} \hat{U}
\right)
}.
\end{align}
Therefore, for any observable $O$, the estimator
\begin{align}
\hat{o} &= \trace{O \hat{\rho}},
&
\text{where}
&&
\hat{\rho} &= \mathcal M^{-1}_{\mathcal U} \left(\hat{U}^{\dagger} \ket{\estimate{\mathbf b}} \bra{\estimate{\mathbf b}} \estimate{U}\right),
\end{align}
is unbiased: $\expectation{\hat{o}} = \trace{O \rho}$.
For it to be useful, however, we must also bound its variance
\begin{align}
\var\left(\hat{o}\right) &= 
\expectation{
{\left(
\hat{o} - \trace{O \rho}
\right)}^2
}
\leq
\max_{\sigma: \text{state}}
\expectation{\hat{o}^2}
=
\norm{O}_{\mathcal U}^2,
\end{align}
where $\norm{O}_{\mathcal U}$, the \emph{shadow norm} of $O$, is the maximum root mean square of the estimator $\hat{o}$ over all normalized states.
Given a collection of observables, the sample mean over many classical shadows can be used to estimate their expectations with very favorable scaling, as captured by the following theorem.

\begin{theorem}[\cite{zhao2021fermionic,huang2020predicting}]
\label{thm:classical-shadows}
Suppose we have a collection of $L$ observables $O_1,\ldots, O_L$ .
Then, for all $\epsilon, \delta \in (0, 1)$, 
\begin{align}
M &= O\left(\frac{\log(2 L / \delta)}{\epsilon^2} \max_j \norm{O_j}_{\mathcal U}^2\right)
\end{align}
samples suffice to estimate $\trace{O_j\rho}$ for all $j$ with additive error $\epsilon$ and with probability at least $1 - \delta$.
Furthermore, if $\abs{\trace{\hat{o}_j}} \leq \norm{O_j}_{\mathcal U}^2$ for all $j \in [L]$ then sample-mean estimators suffice; otherwise, median-of-means estimators suffice.
\end{theorem}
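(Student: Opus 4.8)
The plan is to combine the unbiasedness and variance bound of the single-shot estimator with a median-of-means amplification and a union bound over the $L$ observables. Fix an observable $O_j$ and recall that a single classical shadow produces $\hat{o}_j = \trace{O_j \hat{\rho}}$ with $\expectation{\hat{o}_j} = \trace{O_j \rho}$, and that by the very definition of the shadow norm in the preamble, $\var(\hat{o}_j) \leq \max_{\sigma}\expectation{\hat{o}_j^2} = \norm{O_j}_{\mathcal U}^2$. The difficulty is that a second-moment bound alone yields only Chebyshev-type (polynomial) concentration, so a plain sample mean together with a union bound over the $L$ observables would cost a factor of $L/\delta$ rather than the desired $\log(L/\delta)$. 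Converting the former dependence into the latter is the crux of the argument, and it is exactly what the median-of-means construction buys us.

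First I would set up the median-of-means estimator. Partition the $M$ samples into $K$ batches of size $N = M/K$, and within each batch $k$ form the sample mean $\bar{o}_j^{(k)}$. By Chebyshev's inequality, $\Pr[\abs{\bar{o}_j^{(k)} - \trace{O_j \rho}} \geq \epsilon] \leq \var(\hat{o}_j)/(N \epsilon^2) \leq \norm{O_j}_{\mathcal U}^2/(N \epsilon^2)$. Choosing $N = \Theta(\max_j \norm{O_j}_{\mathcal U}^2 / \epsilon^2)$, large enough that this probability is at most $1/3$, makes each batch mean ``good'' (within $\epsilon$ of the target) with constant probability for every $j$ simultaneously.

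Next I would take the reported estimate to be the median of the $K$ batch means. The median can deviate from $\trace{O_j \rho}$ by at least $\epsilon$ only if at least half of the batches are bad; since the batches are independent and each is bad with probability at most $1/3$, a Chernoff bound gives a failure probability decaying as $e^{-\Omega(K)}$. Taking $K = \Theta(\log(L/\delta))$ drives this below $\delta/L$ for each fixed $j$, and a union bound over the $L$ observables yields total failure probability below $\delta$. The overall sample count is then $M = NK = O(\log(L/\delta)\,\epsilon^{-2}\,\max_j \norm{O_j}_{\mathcal U}^2)$, matching the claim after absorbing the harmless constant in $\log(2L/\delta)$.

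Finally, for the sample-mean refinement, I would observe that when the single-shot estimator is almost surely bounded on the shadow-norm scale, one can apply Hoeffding's (or Bernstein's) inequality directly to the full $M$-sample mean, obtaining exponential tails with no batching. In that regime a plain sample mean already concentrates at the same rate, so the median step is unnecessary and the identical $M$ suffices. I expect the main obstacle to be the median-of-means amplification, namely verifying that a mere second-moment bound is enough to secure the logarithmic-in-$\delta$ dependence; once that is established, both the union bound and the bounded-estimator special case are routine.
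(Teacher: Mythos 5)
The paper does not prove this theorem itself --- it is imported verbatim from the cited references --- and your argument is essentially the standard proof given there: Chebyshev on batch means, a Chernoff bound on the median, and a union bound over the $L$ observables, yielding $M = NK = O(\log(L/\delta)\,\epsilon^{-2}\max_j\norm{O_j}_{\mathcal U}^2)$. The one small inaccuracy is in the sample-mean clause: with only the almost-sure bound $\abs{\hat o_j}\le\norm{O_j}_{\mathcal U}^2$, Hoeffding alone gives $M = O(\norm{O_j}_{\mathcal U}^4\log(L/\delta)/\epsilon^2)$, which is too weak; you need Bernstein's inequality, combining the variance bound $\var(\hat o_j)\le\norm{O_j}_{\mathcal U}^2$ with the almost-sure bound, to recover the claimed $\norm{O_j}_{\mathcal U}^2$ scaling --- your parenthetical ``(or Bernstein's)'' is in fact the load-bearing option.
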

Using the affine-matchgate ensemble, the restriction on $\abs{\trace{\estimate{o}_j}}$ is satisfied for $O_j = \Gamma_{\boldsymbol \mu}$ (as noted in \cite{zhao2021fermionic}, but not for, e.g., $O = \ket{\psi} \bra{\psi}$.

\paragraph{Fermionic states and operators}

We will describe fermionic systems in second quantization (i.e., the occupation basis).
Each computational basis state $\ket{\mathbf x}$ of an $\nummodes$-mode fermionic system is identified by an $\nummodes$-bit string $\mathbf x \in {\{0, 1\}}^{\nummodes}$, where $x_i = 1$ indicates that mode $i$ is occupied and $x_i = 0$ indicates that it is not.\footnote{The state $\ket{\mathbf x}$ may or may not be the same as the usual qubit basis state, depending on the fermion-to-qubit encoding used. 
(They are the same, for example, using the Jordan-Wigner encoding.)
Nothing in this paper depends on the particular fermion-to-qubit encoding.
}
A complete basis of operators is generated by the annihilation operators 
${\left(a_i\right)}_{i=1}^{\nummodes}$, defined by 
\begin{align}
a_i \ket{\ldots, x_{i-1}, 0, \ldots} &= 
0
&
a_i \ket{\ldots, x_{i-1}, 1, \ldots} &= 
{(-1)}^{\sum_{j < i} x_j}
\ket{\ldots, x_{i-1}, 0, \ldots},
\end{align}
and their Hermitian conjugates, the creation operators ${\left(a_i^{\dagger}\right)}_{i=1}^{\nummodes}$.
We also use the basis generated by the $2\nummodes$ single-mode Majorana operators
\begin{align}
\gamma_{2i-1} &= a_i + a_i^{\dagger}, 
&
\gamma_{2i} &= -i \left(a_i - a_i^{\dagger}\right).
\end{align}
For an index sequence $\boldsymbol \mu \in {[2\nummodes]}^k$, we denote by 
\begin{align}
\Gamma_{\boldsymbol \mu} &= {(-i)}^{\binom{k}{2}} \gamma_{\mu_1} \cdots \gamma_{\mu_k}
\end{align}
the corresponding degree-$k$ Majorana operator.
Furthermore, define
\begin{align}
\mathcal C_{2\nummodes, j}
&= \left\{\left(\mu_1, \ldots, \mu_j\right) : 1 \leq \mu_1 < \cdots \mu_j \leq 2\nummodes\right\}
&
\mathcal C_{2\nummodes, \mathrm{even}},
&=
\bigcup_{j=0}^{\nummodes} \mathcal C_{2\nummodes, 2j},
&
\mathcal C_{2\nummodes}
&=
\bigcup_{j=0}^{2\nummodes} \mathcal C_{2\nummodes, j},\\
\mathcal D_{2\nummodes, 2j}
&= 
\mathrlap{
\left\{
    \left(
    2{i_1} - 1, 2{i_1}, \cdots , 2{i_j} - 1, 2i_j 
    \right) : 1 \leq i_1 < \cdots < i_j  \leq \nummodes
\right\}
}
&&&
\mathcal D_{2\nummodes}
&=
\bigcup_{j=0}^{\nummodes} \mathcal C_{2\nummodes, 2j}.
\end{align}

\paragraph{Fermionic Gaussians and matchgates}
Fermionic Gaussian states and operators are special classes that are efficiently simulatable classically.
There are several different ways of defining matchgates and thus of showing their tractability~\cite{valiant2005holographic,jozsa2008matchgates,cai2009theory,bravyi2004lagrangian,terhal2002classical,knill2001fermionic}.
One way is to define a matchgate as a rank-$r$ tensor $T$ (with each leg having bond dimension $2$) with an associated $s \times s$ anti-symmetric matrix $A$, with $s \geq r$, such that for $\mathbf x \in {\{0, 1\}}^r$, the entry $T(\mathbf x)$ is given by the Pfaffian $\pfaff(A_{\mathbf x, \mathbf x})$ of the submatrix $A_{\mathbf x, \mathbf x}$, which excludes the $i$th row and $i$th column of $A$ when $x_i=0$.
(Incidentally, this correspondence with Pfaffians is why we use the term ``perfect matching'' for a partition of a set into pairs, despite the implicit graph always being the complete one.)
If such matchgate tensors are connected in a tensor network whose graph is planar (or, more generally, Pfaffian), then the tensor network can be contracted efficiently using the FKT algorithm\cite{kasteleyn1961statistics,temperley1961dimer}.

The correspondence between the physics and computer science terminologies is as follows.
An $\nummodes$-mode \emph{fermionic Gaussian unitary} is a unitary rank-$2r$ matchgate tensor.
An $\nummodes$-mode \emph{fermionic Gaussian state} is a normalized matchgate tensor, for both pure and mixed states.
An $\numelec$-electron, $\nummodes$-mode \emph{Slater determinant} is a pure fermionic Gaussian state with fixed particle number.

\paragraph{Fermionic partial tomography via classical shadows}

The Majorana operators 
$\left\{\Gamma_{\boldsymbol \mu} \middle| \boldsymbol \mu \in \mathcal C_{2\nummodes}\right\}$
form an orthonormal basis for Hermitian operators.
However, physical observables are spanned by the even-degree operators $\left\{\Gamma_{\boldsymbol \mu} \middle| \boldsymbol \mu \in \mathcal C_{2\nummodes, \mathrm{even}}\right\}$.
This is important because the affine-matchgate channel is invertible only on this even-degree subspace.
More specifically, physical observables are typically spanned by the low- and even-degree Majorana operators, so that their expectation values are completely determined by the $k$-body reduced density matrices ($k$-RDMs) for small $k$.
Zhao et al.\cite{zhao2021fermionic} showed that the $k$-RDMs can be efficiently estimated using classical shadows.
\begin{theorem}[\cite{zhao2021fermionic}]\label{thm:k-RDMs}
All $k$-RDMs can be estimated to within $\epsilon$ additive error and with probability at least $1 - \delta$ using 
$O\left(\binom{\nummodes}{k} k^{3/2} \log (\nummodes / \delta) / \epsilon^2\right)$ samples and in $\poly\left(\nummodes, \epsilon^{-1}, \log(1/\delta)\right)$ time.
\end{theorem}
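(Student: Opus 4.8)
The plan is to instantiate \Cref{thm:classical-shadows} with the affine-matchgate ensemble $\mathcal U$ and the family of observables $\Gamma_{\boldsymbol\mu}$ with $\boldsymbol\mu \in \mathcal C_{2\nummodes, 2k}$, since the entries of the $k$-RDM are exactly the expectations $\trace{\Gamma_{\boldsymbol\mu}\rho}$ of the even Majorana operators of degree at most $2k$, and estimating each of these to additive error $\epsilon$ is what the statement requires. There are $L = \binom{2\nummodes}{2k} = O(\nummodes^{2k})$ such operators, so the union-bound factor is $\log(2L/\delta) = O(k\log(\nummodes/\delta))$. Given this, \Cref{thm:classical-shadows} reduces the whole claim to the bound $\norm{\Gamma_{\boldsymbol\mu}}_{\mathcal U}^2 = O\!\left(\binom{\nummodes}{k}\sqrt{k}\right)$ on the shadow norm of a degree-$2k$ Majorana operator: the product of the two factors is then $O\!\left(\binom{\nummodes}{k}k^{3/2}\log(\nummodes/\delta)/\epsilon^2\right)$, as claimed.

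First I would compute the action of $\mathcal M_{\mathcal U}$ on each Majorana degree sector. Since conjugation by an affine-matchgate unitary signs and permutes the Majorana operators, the ensemble average defining $\mathcal M_{\mathcal U}(\Gamma_{\boldsymbol\mu})$ reduces to a combinatorial count; by the symmetry of the ensemble this is a degree-dependent scalar, $\mathcal M_{\mathcal U}(\Gamma_{\boldsymbol\mu}) = \lambda_{2k}\Gamma_{\boldsymbol\mu}$. Computing $\lambda_{2k}$ explicitly supplies the needed rescaling and certifies that $\mathcal M_{\mathcal U}$ is invertible on the even subspace, so the single-shot estimator is $\hat o = \lambda_{2k}^{-1}\trace{\Gamma_{\boldsymbol\mu}\hat U^{\dagger}\ket{\hat{\mathbf b}}\bra{\hat{\mathbf b}}\hat U}$.

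The crux is bounding the shadow norm $\norm{\Gamma_{\boldsymbol\mu}}_{\mathcal U}^2 = \max_\sigma \expectation{\hat o^2}$. Using the scalar action of $\mathcal M^{-1}_{\mathcal U}$, this equals $\lambda_{2k}^{-2}$ times an average over the ensemble and the measurement outcomes of $\trace{\Gamma_{\boldsymbol\mu}\hat U^{\dagger}\ket{\hat{\mathbf b}}\bra{\hat{\mathbf b}}\hat U}^2$, with the outcomes weighted by the Born rule for $\sigma$. This is where the paper's combinatorial second-moment analysis enters: I would expand the square as a sum over pairs of Majorana index contractions, use the signed-permutation symmetry to turn the ensemble average into a counting problem over matchings, and bound the resulting sum uniformly in $\sigma$. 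Combining this estimate with the computed $\lambda_{2k}$ yields the $O(\binom{\nummodes}{k}\sqrt{k})$ bound. I expect this second-moment estimate to be the main obstacle, both because it is the step at which the earlier analysis erred (hence the corrected variance advertised in the abstract) and because the combinatorics must be controlled uniformly over all normalized states.

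Finally I would assemble the pieces. The side condition $\abs{\trace{\hat o_j}} \le \norm{O_j}_{\mathcal U}^2$ needed to use sample-mean estimators holds because each $\Gamma_{\boldsymbol\mu}$ is traceless (as noted after \Cref{thm:classical-shadows}), so \Cref{thm:classical-shadows} applies with the two bounded factors above and gives the stated sample complexity. For the runtime, each classical shadow is post-processed in $\poly(\nummodes)$ time: applying $\mathcal M^{-1}_{\mathcal U}$ is a scalar rescaling per degree, and evaluating $\trace{\Gamma_{\boldsymbol\mu}\hat U^{\dagger}\ket{\hat{\mathbf b}}\bra{\hat{\mathbf b}}\hat U}$ reduces to a Pfaffian/overlap computation on the Gaussian description of $\hat U^{\dagger}\ket{\hat{\mathbf b}}$, yielding the claimed $\poly(\nummodes, \epsilon^{-1}, \log(1/\delta))$ overall runtime.
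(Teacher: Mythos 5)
Your proposal is correct and follows essentially the route the paper intends: the paper cites this result from Zhao et al.\ rather than reproving it, but its own machinery---\cref{thm:classical-shadows} combined with the eigenvalue computation $\mathcal M_{\mathcal U}(\Gamma_{\boldsymbol\mu})=\lambda_{\nummodes,k}\Gamma_{\boldsymbol\mu}$ and the shadow-norm value $\lambda_{\nummodes,k}^{-1}=\binom{2\nummodes}{2k}/\binom{\nummodes}{k}=O\bigl(\binom{\nummodes}{k}\sqrt{k}\bigr)$ from \cref{cor:rdm-shadow-norm}---is exactly the instantiation you describe. The only nitpick is your justification of the sample-mean side condition: it holds not because $\Gamma_{\boldsymbol\mu}$ is traceless but because the single-shot estimate satisfies $\abs{\hat o}\leq\lambda_{\nummodes,k}^{-1}\abs{\trace{\Gamma_{\boldsymbol\mu}U^{\dagger}\ket{\mathbf b}\bra{\mathbf b}U}}\leq\lambda_{\nummodes,k}^{-1}=\norm{\Gamma_{\boldsymbol\mu}}_{\mathcal U}^2$.
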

Bonet-Monroig et al.\cite{bonet-monroig2020nearly} give essentially the same result for $k = 2$, using a deterministic set of measurement bases.
The analysis in \cite{bonet-monroig2020nearly} bears some similarity to that here; the edges of what we call a ``perfect matching'' are the generators of a set of commuting Majorana operators that they call a ``commuting clique''.

Our first result is an expression for the shadow norm of a general observable in an arbitrary state.
\begin{restatable}{theorem}{genvarthm}
\label{thm:moments}
For any state and any even-degree Hermitian
\begin{align}
\rho &= \sum_{\boldsymbol \tau \in \mathcal C_{2\nummodes}} g_{\boldsymbol \tau} \Gamma_{\boldsymbol \tau},
&
H &= 
\sum_{\boldsymbol \mu \in \mathcal C_{2\nummodes, \mathrm{even}}} h_{\boldsymbol \mu} \Gamma_{\boldsymbol \mu},
\end{align}
the second moment of the estimator $\hat{h} = \trace{H \hat{\rho}}$ is
\begin{align}
\expectation{\hat{h}^2} &= 
2^{\nummodes}
\sum_{
\boldsymbol \mu, \boldsymbol \mu' \in \mathcal C_{2\nummodes, \mathrm{even}}
: \text{$\abs{\boldsymbol \mu \cap \boldsymbol \mu'}$ even}
}
\kappa_{\nummodes, \boldsymbol \mu, \boldsymbol \mu'}
h_{\boldsymbol \mu} h_{\boldsymbol \mu'} g_{\boldsymbol \mu \boldsymbol \mu'},
\end{align}
where
\begin{align}
\kappa_{\nummodes, \boldsymbol \mu, \boldsymbol \mu'}
&=
\frac{
\lambda_{\nummodes, 
\frac{\abs{\boldsymbol \mu}}{2}, 
\frac{\abs{\boldsymbol \mu'}}{2}, 
\frac{\abs{\boldsymbol \mu \cap \boldsymbol \mu'}}{2}}
}{
\lambda_{\nummodes, \frac{\abs{\boldsymbol \mu}}{2}}
\lambda_{\nummodes, \frac{\abs{\boldsymbol \mu'}}{2}}
},
\\
\lambda_{\nummodes, \boldsymbol \mu}
&=
\lambda_{\nummodes, \frac{\abs{\boldsymbol \mu}}{2}}
=
\lambda_{\nummodes, k}
=
\begin{cases}
\frac{
\binom{\nummodes}{k}
}{
\binom{2\nummodes}{2k}
},  & k \in \mathbb Z,\\
0, &\text{otherwise},
\end{cases}
\\
\lambda_{\nummodes, \boldsymbol \mu, \boldsymbol \mu'}
&=
\lambda_{\nummodes, 
\frac{\abs{\boldsymbol \mu}}{2},
\frac{\abs{\boldsymbol \mu'}}{2}, 
\frac{\abs{\boldsymbol \mu \cap \boldsymbol \mu'}}{2}
}
=
\lambda_{\nummodes, k, k', a}
=
\begin{cases}
\frac{
\binom{\nummodes}{a, k-a, k'-a,\nummodes - k - k' + a}
}{
\binom{2\nummodes}{2a, 2(k-a), 2(k'-a),2(\nummodes - k - k' + a)}
}
, & k, k', a \in \mathbb Z,\\
0, &\text{otherwise}.
\end{cases}
\end{align}
\end{restatable}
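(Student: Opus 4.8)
The plan is to move the inverse channel onto the observable and reduce the second moment to a single combinatorial average. Because $\mathcal M_{\mathcal U}$ is self-adjoint in the Hilbert--Schmidt inner product, $\hat h = \trace{H\hat\rho} = \trace{\mathcal M_{\mathcal U}^{-1}(H)\,\hat U^\dagger\ket{\hat{\mathbf b}}\bra{\hat{\mathbf b}}\hat U}$, which is well defined precisely because $H$ is even-degree (where $\mathcal M_{\mathcal U}$ is invertible). First I would establish that $\mathcal M_{\mathcal U}$ acts as a scalar on each even-degree Majorana operator, $\mathcal M_{\mathcal U}(\Gamma_{\boldsymbol\mu}) = \lambda_{\nummodes,\frac{\abs{\boldsymbol\mu}}{2}}\Gamma_{\boldsymbol\mu}$, so that $\mathcal M_{\mathcal U}^{-1}(\Gamma_{\boldsymbol\mu}) = \lambda_{\nummodes,\frac{\abs{\boldsymbol\mu}}{2}}^{-1}\Gamma_{\boldsymbol\mu}$; this fixes the normalization in $\kappa$ and is a warm-up for the main count. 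Expanding $H$ in the Majorana basis, squaring, and taking the expectation gives
\[
\mathbb E[\hat h^2] = \sum_{\boldsymbol\mu,\boldsymbol\mu'\in\mathcal C_{2\nummodes,\mathrm{even}}}\frac{h_{\boldsymbol\mu}h_{\boldsymbol\mu'}}{\lambda_{\nummodes,\frac{\abs{\boldsymbol\mu}}{2}}\lambda_{\nummodes,\frac{\abs{\boldsymbol\mu'}}{2}}}\,T_{\boldsymbol\mu,\boldsymbol\mu'},
\]
where the Born weight supplies a third factor in
\[
T_{\boldsymbol\mu,\boldsymbol\mu'} = \mathbb E_{\hat U}\sum_{\mathbf b}\bra{\mathbf b}\hat U\rho\hat U^\dagger\ket{\mathbf b}\,\bra{\mathbf b}\hat U\Gamma_{\boldsymbol\mu}\hat U^\dagger\ket{\mathbf b}\,\bra{\mathbf b}\hat U\Gamma_{\boldsymbol\mu'}\hat U^\dagger\ket{\mathbf b}.
\]
The entire theorem reduces to evaluating $T_{\boldsymbol\mu,\boldsymbol\mu'}$.

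Next I would use that the affine (Clifford) constraint forces each ensemble unitary to act on the Majorana operators as a signed permutation, $\hat U\gamma_j\hat U^\dagger = \varepsilon_j\gamma_{\pi(j)}$, so that $\hat U\Gamma_{\boldsymbol\mu}\hat U^\dagger = \pm\Gamma_{\pi(\boldsymbol\mu)}$; equivalently, $\hat U$ induces a uniformly random perfect matching $M$ of the $2\nummodes$ Majorana modes (the preimage under $\pi$ of the diagonal pairs $\{2i-1,2i\}$). The key simplification is that $\bra{\mathbf b}\Gamma_{\boldsymbol\nu}\ket{\mathbf b}$ vanishes unless $\boldsymbol\nu\in\mathcal D_{2\nummodes}$ is a union of diagonal pairs, in which case it equals $(-1)^{\sum_{i\in P(\boldsymbol\nu)}b_i}$ for the corresponding pair-index set $P(\boldsymbol\nu)\subseteq[\nummodes]$. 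Hence the $\Gamma_{\boldsymbol\mu}$ and $\Gamma_{\boldsymbol\mu'}$ factors survive only when the supports of $\boldsymbol\mu$ and $\boldsymbol\mu'$ are both unions of edges of $M$. Expanding $\rho = \sum_{\boldsymbol\tau}g_{\boldsymbol\tau}\Gamma_{\boldsymbol\tau}$ in the first factor and performing $\sum_{\mathbf b}$ collapses the product of three sign functions to $2^{\nummodes}$ times the indicator that the three pair-index sets have empty symmetric difference; this forces $\boldsymbol\tau = \boldsymbol\mu\boldsymbol\mu'$, so only the coefficient $g_{\boldsymbol\mu\boldsymbol\mu'}$ of $\rho$ on $\Gamma_{\boldsymbol\mu}\Gamma_{\boldsymbol\mu'}$ contributes.

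Evaluating $T_{\boldsymbol\mu,\boldsymbol\mu'}$ is then a counting problem over the uniformly random matching $M$: the probability that $M$ has both the support of $\boldsymbol\mu$ and the support of $\boldsymbol\mu'$ as unions of edges. Every edge must then lie inside one of the four disjoint point-classes — shared ($\boldsymbol\mu\cap\boldsymbol\mu'$), $\boldsymbol\mu$-only, $\boldsymbol\mu'$-only, and untouched — each of which must be even, forcing $\abs{\boldsymbol\mu\cap\boldsymbol\mu'}$ to be even (exactly the restriction on the sum). Writing $k=\abs{\boldsymbol\mu}/2$, $k'=\abs{\boldsymbol\mu'}/2$, $a=\abs{\boldsymbol\mu\cap\boldsymbol\mu'}/2$, the count factors into independent matchings within classes of sizes $2a,2(k-a),2(k'-a),2(\nummodes-k-k'+a)$, and dividing by the total number of perfect matchings gives
\[
\Pr_M[\text{compatible}] = \frac{\binom{\nummodes}{a,k-a,k'-a,\nummodes-k-k'+a}}{\binom{2\nummodes}{2a,2(k-a),2(k'-a),2(\nummodes-k-k'+a)}} = \lambda_{\nummodes,k,k',a}.
\]
Therefore $T_{\boldsymbol\mu,\boldsymbol\mu'} = 2^{\nummodes}\lambda_{\nummodes,k,k',a}\,g_{\boldsymbol\mu\boldsymbol\mu'}$, and substituting into the sum, with the normalizations $\lambda_{\nummodes,k}\lambda_{\nummodes,k'}$ from the inverse channel assembling into $\kappa_{\nummodes,\boldsymbol\mu,\boldsymbol\mu'}$, yields the claimed formula. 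As a check, $\boldsymbol\mu=\boldsymbol\mu'$ with $\rho = I/2^{\nummodes}$ returns $\lambda_{\nummodes,k}$, consistent with the channel eigenvalue.

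The main obstacle is verifying that $g_{\boldsymbol\mu\boldsymbol\mu'}$ appears with exactly the right real, sign-free coefficient. Two sign sources must be tracked: the signs $\varepsilon_j$ of the signed permutation cancel because every Majorana index occurs an even number of times among the three supports $\boldsymbol\mu,\boldsymbol\mu',\boldsymbol\mu\boldsymbol\mu'$; and the reordering signs together with the $(-i)^{\binom{\abs{\boldsymbol\mu}}{2}}$ phases in the definition of $\Gamma_{\boldsymbol\mu}$ must combine into the structure constant relating $\Gamma_{\boldsymbol\mu}\Gamma_{\boldsymbol\mu'}$ to $\Gamma_{\boldsymbol\mu\boldsymbol\mu'}$. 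A secondary obstacle is rigorously justifying that averaging over the affine-matchgate ensemble induces the uniform distribution over perfect matchings $M$ (a design-type property of the ensemble); I would isolate this as a lemma, checking that the measurement statistics depend on $\hat U$ only through $M$ and that $M$ is uniform. The bulk of the remaining work is the bookkeeping of these signs, which I would organize by first carrying out the two-operator (second-moment) case to fix all conventions before treating the three-factor correlation.
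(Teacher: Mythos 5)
Your proposal is correct and follows essentially the same route as the paper's proof: move the inverse channel onto $H$, reduce the second moment to the three-point correlator $\expectation{\trace{\Gamma_{\boldsymbol\mu}\hat\rho}\trace{\Gamma_{\boldsymbol\mu'}\hat\rho}}$, use the signed-permutation action on the Majoranas together with the sum over $\mathbf b$ to force $\boldsymbol\tau = \boldsymbol\mu\boldsymbol\mu'$, and evaluate the surviving term as the probability that a uniformly random permutation simultaneously diagonalizes both index sequences, which is exactly $\lambda_{\nummodes,k,k',a}$. The sign bookkeeping you flag as the main obstacle is resolved in the paper precisely as you anticipate, via the identity $\Gamma_{\boldsymbol\mu}\Gamma_{\boldsymbol\mu'} = \sign(\mathbf p(\boldsymbol\tau))\sign(\mathbf p(\boldsymbol\mu))\sign(\mathbf p(\boldsymbol\mu'))\Gamma_{\boldsymbol\tau}$ obtained by conjugating the product by the measurement unitary.
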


From this we can derive the shadow norm for several properties of interest.

\begin{corollary}[Theorem 1 of \cite{zhao2021fermionic}]\label{cor:rdm-shadow-norm}
For every $\boldsymbol \mu \in \mathcal C_{2\nummodes, 2k}$,
$
\norm{\Gamma_{\boldsymbol \mu}}_{\mathcal U} 
=
\lambda_{\nummodes, k}^{-1}.
$
\end{corollary}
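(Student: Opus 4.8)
The plan is to read off the shadow norm directly from \Cref{thm:moments} by specializing the observable to the single Majorana term $H = \Gamma_{\boldsymbol\mu}$ with $\boldsymbol\mu \in \mathcal C_{2\nummodes, 2k}$, so that $h_{\boldsymbol\nu} = \delta_{\boldsymbol\nu \boldsymbol\mu}$. In the double sum over $\boldsymbol\nu, \boldsymbol\nu' \in \mathcal C_{2\nummodes, \mathrm{even}}$ only the diagonal term $\boldsymbol\nu = \boldsymbol\nu' = \boldsymbol\mu$ contributes, and the parity constraint is automatic because $\abs{\boldsymbol\mu \cap \boldsymbol\mu} = 2k$ is even. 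Hence \Cref{thm:moments} collapses to
\begin{align}
\expectation{\hat h^2} &= 2^{\nummodes}\, \kappa_{\nummodes, \boldsymbol\mu, \boldsymbol\mu}\, g_{\boldsymbol\mu\boldsymbol\mu},
\end{align}
and the whole proof reduces to evaluating the two factors $g_{\boldsymbol\mu\boldsymbol\mu}$ and $\kappa_{\nummodes, \boldsymbol\mu, \boldsymbol\mu}$.

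For the state-dependent factor, I would observe that $g_{\boldsymbol\mu\boldsymbol\mu}$ is the coefficient in $\rho$ of the operator indexed by the symmetric difference $\boldsymbol\mu \triangle \boldsymbol\mu = \emptyset$, i.e. the coefficient $g_\emptyset$ of $\Gamma_\emptyset = I$. Orthonormality of the Majorana basis, $\trace{\Gamma_{\boldsymbol\sigma}\Gamma_{\boldsymbol\tau}} = 2^{\nummodes}\delta_{\boldsymbol\sigma \boldsymbol\tau}$, gives $g_{\boldsymbol\tau} = 2^{-\nummodes}\trace{\rho\,\Gamma_{\boldsymbol\tau}}$, so $g_\emptyset = 2^{-\nummodes}\trace{\rho} = 2^{-\nummodes}$ for \emph{every} normalized state. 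The key structural point is therefore that $\expectation{\hat h^2}$ does not depend on $\rho$ at all: the only density-matrix coefficient surviving the collapse is the identity coefficient, which is pinned by $\trace{\rho}=1$. This makes the maximization over states in the definition of the shadow norm vacuous, and in fact shows that the estimator has the same root-mean-square on every input.

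It then remains to compute the purely combinatorial factor $\kappa_{\nummodes, \boldsymbol\mu, \boldsymbol\mu} = \lambda_{\nummodes, k, k, k}/\lambda_{\nummodes, k}^2$. Setting $k' = a = k$ in the definition of $\lambda_{\nummodes, k, k', a}$ makes the two blocks $k-a$ and $k'-a$ vanish, collapsing the four-part multinomials to ordinary binomials:
\begin{align}
\lambda_{\nummodes, k, k, k}
&=
\frac{\binom{\nummodes}{k, 0, 0, \nummodes-k}}{\binom{2\nummodes}{2k, 0, 0, 2\nummodes-2k}}
=
\frac{\binom{\nummodes}{k}}{\binom{2\nummodes}{2k}}
=
\lambda_{\nummodes, k}.
\end{align}
Substituting back gives $\kappa_{\nummodes, \boldsymbol\mu, \boldsymbol\mu} = \lambda_{\nummodes, k}/\lambda_{\nummodes, k}^2 = \lambda_{\nummodes, k}^{-1}$, and the prefactor $2^{\nummodes}$ cancels against $g_\emptyset = 2^{-\nummodes}$, so that $\max_{\sigma}\expectation{\hat h^2} = \norm{\Gamma_{\boldsymbol\mu}}_{\mathcal U}^2 = \lambda_{\nummodes, k}^{-1}$, the claimed value.

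I do not anticipate a genuine obstacle, as the statement is an immediate specialization of \Cref{thm:moments}; the only two points needing care are recognizing that $g_{\boldsymbol\mu\boldsymbol\mu}$ is the fixed identity coefficient (which is what trivializes the maximization and removes any state dependence) and performing the multinomial-to-binomial collapse $\lambda_{\nummodes, k, k, k} = \lambda_{\nummodes, k}$. As a sanity check I would confirm the degenerate case $k=0$, where $\Gamma_{\boldsymbol\mu}=I$ and $\lambda_{\nummodes, 0}=\binom{\nummodes}{0}/\binom{2\nummodes}{0}=1$, so the shadow norm is $1$ as required.
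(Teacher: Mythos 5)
Your proof is correct and follows essentially the same route as the paper: both specialize \cref{thm:moments} to $H=\Gamma_{\boldsymbol\mu}$, observe that the only surviving state coefficient is $g_{\emptyset}=2^{-\nummodes}$ (making the maximization over states vacuous), and evaluate $\kappa_{\nummodes,\boldsymbol\mu,\boldsymbol\mu}=\lambda_{\nummodes,k,k,k}/\lambda_{\nummodes,k}^{2}=\lambda_{\nummodes,k}^{-1}$. The only caveat, inherited from the paper's own statement, is that what is actually computed is the \emph{squared} shadow norm, $\norm{\Gamma_{\boldsymbol\mu}}_{\mathcal U}^{2}=\lambda_{\nummodes,k}^{-1}$.
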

\noindent
\cref{cor:rdm-shadow-norm} follows from the fact that, for any state $\sigma = \sum_{\boldsymbol \tau} g_{\boldsymbol \tau} \Gamma_{\boldsymbol \tau}$,
\begin{align}
\expectation{\trace{\Gamma_{\boldsymbol \mu} \estimate{\sigma}}^2}
&=
2^{\nummodes}
\kappa^{-1}_{\nummodes, \boldsymbol \mu, \boldsymbol \mu} g_{\emptyset}
=
\frac{
\lambda_{\nummodes, \abs{\boldsymbol \mu}, \abs{\boldsymbol \mu}, \abs{\boldsymbol \mu}}
}{
\lambda_{\nummodes, \abs{\boldsymbol \mu}}^2
}
=
\lambda_{\nummodes, \abs{\boldsymbol \mu}}^{-1}.
\end{align}
\begin{corollary}\label{cor:shadow-norm-bound}
The shadow norm of a general even observable $H$ is upper bounded by 
\begin{align}
\norm{H}_{\mathcal U}^2
&\leq 
\sum_{\boldsymbol \mu, \boldsymbol \mu' \in \mathcal C_{2\nummodes, \mathrm{even}}
: \text{$\abs{\boldsymbol \mu \cap \boldsymbol \mu'}$ even}
}
\abs{h_{\boldsymbol \mu}}
\abs{h_{\boldsymbol \mu'}}
\kappa^{-1}_{\nummodes, \boldsymbol \mu, \boldsymbol \mu'}
.
\end{align}
\end{corollary}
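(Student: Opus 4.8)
The plan is to read the bound off \cref{thm:moments} by maximizing over states term by term. By definition the shadow norm is $\norm{H}_{\mathcal U}^2 = \max_{\sigma}\expectation{\hat{h}^2}$, the maximum taken over all normalized states $\sigma = \sum_{\boldsymbol\tau}g_{\boldsymbol\tau}\Gamma_{\boldsymbol\tau}$, so it suffices to bound $\expectation{\hat{h}^2}$ uniformly in $\sigma$. Substituting the closed form of \cref{thm:moments} and applying the triangle inequality to the finite sum gives
\begin{align}
\expectation{\hat{h}^2}
\leq
2^{\nummodes}
\sum_{\substack{\boldsymbol\mu,\boldsymbol\mu' \in \mathcal C_{2\nummodes,\mathrm{even}} \\ \abs{\boldsymbol\mu\cap\boldsymbol\mu'}\ \text{even}}}
\kappa^{-1}_{\nummodes,\boldsymbol\mu,\boldsymbol\mu'}\,
\abs{h_{\boldsymbol\mu}}\,
\abs{h_{\boldsymbol\mu'}}\,
\abs{g_{\boldsymbol\mu\boldsymbol\mu'}},
\end{align}
where I have used that the per-pair factor is nonnegative: the ``$\abs{\boldsymbol\mu\cap\boldsymbol\mu'}$ even'' restriction (inherited from \cref{thm:moments}) forces every multinomial parameter entering $\kappa_{\nummodes,\boldsymbol\mu,\boldsymbol\mu'}$ to be an integer, so the factor is well defined and positive on the support of the sum.

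The one substantive ingredient is a single state-independent bound on the expansion coefficients of any density operator, namely $\abs{g_{\boldsymbol\tau}} \leq 2^{-\nummodes}$ for every $\boldsymbol\tau$. I would establish this from two elementary facts: first, since the $\Gamma_{\boldsymbol\tau}$ are orthonormal with respect to the normalized trace inner product, $g_{\boldsymbol\tau} = 2^{-\nummodes}\trace{\sigma\,\Gamma_{\boldsymbol\tau}}$; and second, each $\Gamma_{\boldsymbol\tau}$ is a Hermitian unitary, so its spectrum lies in $\{\pm 1\}$ and $\abs{\trace{\sigma\,\Gamma_{\boldsymbol\tau}}} \leq \norm{\Gamma_{\boldsymbol\tau}}_{\infty}\,\norm{\sigma}_{1} = 1$ by trace duality. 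This bound is saturated at $\boldsymbol\tau = \emptyset$ (where $g_{\emptyset} = 2^{-\nummodes}$), so for the single operator $H = \Gamma_{\boldsymbol\mu}$ the sum collapses to one term for which the triangle inequality is an equality, reproducing the exact value of \cref{cor:rdm-shadow-norm}.

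Applying $\abs{g_{\boldsymbol\mu\boldsymbol\mu'}} \leq 2^{-\nummodes}$ to each summand renders the right-hand side independent of $\sigma$, and the prefactor $2^{\nummodes}$ cancels the $2^{-\nummodes}$, leaving exactly the claimed sum; as the bound holds for every $\sigma$, it bounds the maximum and hence $\norm{H}_{\mathcal U}^2$. I do not anticipate a serious obstacle, since the argument is just a triangle inequality together with the spectral bound on Majorana monomials. The only real cost is in the triangle-inequality step, which decouples the summands and discards the signs of both $g_{\boldsymbol\mu\boldsymbol\mu'}$ and $h_{\boldsymbol\mu}h_{\boldsymbol\mu'}$; because no single physical state can in general align all of these phases simultaneously, the resulting bound is expected to be loose except in the single-Majorana case, and quantifying that slack (rather than proving the stated inequality) would be the genuinely hard part.
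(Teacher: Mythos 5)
Your argument is exactly the paper's: the paper dispatches this corollary in one line by noting that $\abs{g_{\boldsymbol\mu}} \leq 2^{-\nummodes}$ for any state and any $\boldsymbol\mu \in \mathcal C_{2\nummodes}$, and your proposal simply fills in the (correct) details of that bound and the ensuing triangle inequality applied to \cref{thm:moments}. The only wrinkle is the paper's own notational inconsistency between writing $\kappa_{\nummodes,\boldsymbol\mu,\boldsymbol\mu'}$ in \cref{thm:moments} and $\kappa^{-1}_{\nummodes,\boldsymbol\mu,\boldsymbol\mu'}$ in the corollaries for the same quantity $\lambda_{\nummodes,k,k',a}/(\lambda_{\nummodes,k}\lambda_{\nummodes,k'})$, which you resolve in the way consistent with the corollary's statement.
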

\noindent
\cref{cor:shadow-norm-bound} follows because $\abs{g_{\boldsymbol \mu}} \leq 2^{-\nummodes}$ for any state and any $\boldsymbol \mu \in \mathcal C_{2\nummodes}$.

Importantly, it is not sufficient that the estimator have sufficiently small variance. 
It must also be efficiently calculable classically given the measurement unitary and outcome.
In general, and in particular for $H \in \left\{ \ket{\psi} \bra{\psi}, P(\psi, \theta)\right\}$, the estimator 
$\trace{\mathcal M^{-1} \left(H\right) U^{\dagger} \ket{\mathbf b} \bra{\mathbf b} U}$ does not correspond to a matchgate tensor network.
However, because the channel acts identically on each fixed-degree subspace, we can compute the estimator using a linear combination of $(\nummodes + 1)$
matchgate tensor networks, as detailed in \cref{sec:computation}.

\begin{restatable}{lemma}{computationlem}\label{lem:computation}
Let $\ket{\psi}$ and $\ket{\phi}$ be pure $\nummodes$-mode fermionic Gaussian states.
Then $\trace{\mathcal M_{\mathcal U}^{-1} \left( \ket{\psi} \bra{\psi}\right) \ket{\phi} \bra{\phi}}$  can be calculated exactly in $\poly(\nummodes)$ classical time.
If $\ket{\psi}$ is further restricted to having fixed, even particle number (i.e., to being a Slater determinant),
then $\trace{\mathcal M_{\mathcal U}^{-1} \left( \ket{\mathbf 0} \bra{\psi}\right) \ket{\phi} \bra{\phi}}$  can be calculated exactly in $\poly(\nummodes)$ classical time.
\end{restatable}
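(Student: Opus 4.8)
The plan is to exploit the fact, noted in the excerpt, that the affine-matchgate channel acts identically on each fixed-degree subspace: concretely, $\mathcal M_{\mathcal U}$ multiplies every degree-$2k$ Majorana operator by the known scalar $\lambda_{\nummodes,k}$, so $\mathcal M_{\mathcal U}^{-1}$ multiplies it by $\lambda_{\nummodes,k}^{-1}$. Accordingly, I would first decompose the observable $H = \sum_{k=0}^{\nummodes} H^{(k)}$ into its homogeneous degree-$2k$ parts $H^{(k)} = \sum_{\boldsymbol\mu \in \mathcal C_{2\nummodes, 2k}} h_{\boldsymbol\mu}\Gamma_{\boldsymbol\mu}$, which gives
\begin{align}
\trace{\mathcal M_{\mathcal U}^{-1}(H)\ket{\phi}\bra{\phi}}
&= \sum_{k=0}^{\nummodes}\lambda_{\nummodes,k}^{-1}\,\bra{\phi} H^{(k)}\ket{\phi}.
\end{align}
This reduces the problem to computing the $\nummodes+1$ scalars $\bra{\phi} H^{(k)}\ket{\phi}$, since the weights $\lambda_{\nummodes,k}^{-1}$ are known in closed form. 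The difficulty is that although $H = \ket{\psi}\bra{\psi}$ is itself a matchgate tensor, its fixed-degree projections $H^{(k)}$ are not, so I cannot contract them individually as matchgate networks.

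To isolate the degree components I would use a generating-function-and-interpolation trick built on the Pfaffian representation of Gaussian operators. A pure Gaussian state has Majorana coefficients $g_{\boldsymbol\mu} = 2^{-\nummodes}\trace{\ket{\psi}\bra{\psi}\,\Gamma_{\boldsymbol\mu}}$ equal, up to a fixed phase, to the Pfaffian $\pfaff(C_{\boldsymbol\mu})$ of the principal submatrix of the $2\nummodes\times 2\nummodes$ covariance matrix $C$ (fermionic Wick's theorem). Since the Pfaffian of a $2k\times 2k$ matrix is homogeneous of degree $k$ in its entries, replacing $C$ by $zC$ scales the degree-$2k$ coefficients by $z^k$; the matchgate tensor $G_z$ defined by the scaled matrix $zC$ therefore satisfies $G_z = \sum_{k=0}^{\nummodes} z^k H^{(k)}$. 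For each fixed $z$ the quantity $\bra{\phi} G_z\ket{\phi} = \trace{G_z\ket{\phi}\bra{\phi}}$ is the contraction of two matchgate tensors and hence computable exactly as a single Pfaffian in $\poly(\nummodes)$ time via the FKT algorithm, even though $G_z$ is generally not a physical state. Evaluating $\bra{\phi} G_z\ket{\phi}$ at $\nummodes+1$ distinct values of $z$ and solving the resulting Vandermonde system recovers each $\bra{\phi} H^{(k)}\ket{\phi}$, after which the weighted sum above yields the answer. The total cost is $\nummodes+1$ Pfaffian evaluations plus one interpolation, all $\poly(\nummodes)$.

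For the second claim I would apply the same machinery to $H = \ket{\mathbf 0}\bra{\psi}$, using two additional observations. First, the outer product of two Slater determinants is a (non-Hermitian) matchgate tensor: writing $\ket{\psi} = U_\psi\ket{\mathbf 0}$ for a number-conserving Gaussian unitary $U_\psi$, we have $\ket{\mathbf 0}\bra{\psi} = \ket{\mathbf 0}\bra{\mathbf 0}\,U_\psi^{\dagger}$, a product of Gaussian operators, so its Majorana coefficients are again Pfaffians of submatrices of a single antisymmetric matrix that can be scaled to build the generating function $\sum_k z^k H^{(k)}$ exactly as before. Second, restricting $\psi$ to fixed, even particle number guarantees that $\ket{\mathbf 0}\bra{\psi}$ is supported on $\mathcal C_{2\nummodes,\mathrm{even}}$, the subspace on which $\mathcal M_{\mathcal U}^{-1}$ is defined, so the degree-scaling identity $\mathcal M_{\mathcal U}^{-1}(H^{(k)}) = \lambda_{\nummodes,k}^{-1} H^{(k)}$ applies to every nonzero component. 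The remaining interpolation-and-assembly steps are identical.

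I expect the main obstacle to be making the Pfaffian representation and its homogeneity precise enough to justify that $G_z$ is a genuine matchgate tensor for every $z$ (so that FKT applies despite $G_z$ not being a state), and, in the second part, pinning down the antisymmetric matrix that represents the off-diagonal operator $\ket{\mathbf 0}\bra{\psi}$ together with the correct phase conventions and the verification that its support is even-degree. Once these representation facts are in hand, the degree-grading, contraction, and interpolation steps are routine.
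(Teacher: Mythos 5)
Your treatment of the first claim is sound but takes a genuinely different route from the paper's. The paper first uses the covariance property $\mathcal M_{\mathcal U}^{-1}(VAV^{\dagger}) = V\mathcal M_{\mathcal U}^{-1}(A)V^{\dagger}$ for Gaussian $V$ to reduce everything to the fixed reference operator $\ket{\mathbf 0}\bra{\mathbf 0}$; its inverse-channel image is diagonal with entries depending only on Hamming weight, and such an operator is written as a discrete-Fourier combination of $\nummodes+1$ single-mode product (hence matchgate) operators $c_j\bigl[\sum_b \omega^{-jb}\ket{b}\bra{b}\bigr]^{\otimes\nummodes}$. You skip the frame rotation and instead evaluate the degree-generating function by scaling the covariance matrix and interpolating; if you chose your $\nummodes+1$ interpolation points to be the $(\nummodes+1)$-th roots of unity, your Vandermonde inversion would become exactly the paper's DFT, so the two arguments are close cousins. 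Both cost $\nummodes+1$ Gaussian-overlap (Pfaffian) evaluations, and your version buys a perfectly conditioned linear system at the price of having to justify that the rescaled tensor $G_z$ is still contractible.

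The second claim, however, has a genuine gap exactly where you suspect trouble. For an $\numelec$-electron Slater determinant with $\numelec \geq 2$, the operator $\ket{\mathbf 0}\bra{\psi}$ is traceless: its degree-$0$ Majorana coefficient is $2^{-\nummodes}\braket{\psi|\mathbf 0} = 0$, and in fact every coefficient of degree below $\numelec$ vanishes, since a degree-$d$ Majorana monomial changes particle number by at most $d$. The representation your scaling trick needs --- all coefficients equal to a common constant times $\pfaff(C_{\boldsymbol\mu,\boldsymbol\mu})$ for one $2\nummodes\times 2\nummodes$ antisymmetric matrix $C$ --- forces that constant to be the degree-$0$ coefficient (the Pfaffian of the empty submatrix is $1$), so no such $C$ exists for a traceless operator and there is nothing to scale. (A smaller slip: a number-conserving $U_{\psi}$ satisfies $U_{\psi}\ket{\mathbf 0}\propto\ket{\mathbf 0}\neq\ket{\psi}$, so the factorization $\ket{\mathbf 0}\bra{\psi}=\ket{\mathbf 0}\bra{\mathbf 0}U_{\psi}^{\dagger}$ requires a general Gaussian unitary.) One could repair this with the more general Gaussian-integral/Lagrangian representation of traceless Gaussian operators, but that is precisely the ``representation fact'' you defer, and it is where the real work lies. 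The paper sidesteps it entirely: it conjugates by a number-preserving Gaussian $V$, which fixes $\ket{\mathbf 0}$ and maps $\ket{\mathbf 1,\mathbf 0}$ to $\ket{\psi}$, thereby reducing to the explicit product operator $\ket{\mathbf 0,\mathbf 0}\bra{\mathbf 1,\mathbf 0}$ whose degree grading is manifest mode by mode, and then applies the same DFT decomposition on the remaining $\nummodes-\numelec$ modes.
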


\begin{restatable}{lemma}{projectornormlem}\label{lem:projector-norm}
The shadow norm of the projector onto $\ket{\mathbf 0}$ is upper bounded by
\begin{align}
\norm{
{\left(\ket{0} \bra{0}\right)}^{\otimes \nummodes}
}_{\mathcal U} \leq \sqrt{2\nummodes}.
\end{align}
\end{restatable}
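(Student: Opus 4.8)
The plan is to expand the projector in the Majorana basis, read off its coefficients $h_{\boldsymbol\mu}$, and then feed them into \cref{cor:shadow-norm-bound}, reducing the shadow-norm bound to a purely combinatorial double sum.

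First I would compute the Majorana expansion. For a single mode $i$ the operator $\Gamma_{(2i-1, 2i)} = -i\gamma_{2i-1}\gamma_{2i}$ acts as $+1$ on $\ket{0}$ and $-1$ on $\ket{1}$, so $\ket{0}\bra{0} = \tfrac12\left(I + \Gamma_{(2i-1, 2i)}\right)$. Taking the tensor product over all modes and expanding,
\begin{align}
{\left(\ket{0}\bra{0}\right)}^{\otimes\nummodes}
= \frac{1}{2^{\nummodes}}\prod_{i=1}^{\nummodes}\left(I + \Gamma_{(2i-1, 2i)}\right)
= \frac{1}{2^{\nummodes}}\sum_{\boldsymbol\mu\in\mathcal D_{2\nummodes}}\Gamma_{\boldsymbol\mu}.
\end{align}
The only point needing care is that the commuting product $\prod_{i\in S}\Gamma_{(2i-1, 2i)}$ equals the single operator $\Gamma_{\boldsymbol\mu}$ indexed by $\boldsymbol\mu = (2i-1, 2i : i\in S)\in\mathcal D_{2\nummodes}$ with no leftover phase; this follows from the normalization $\Gamma_{\boldsymbol\mu} = {(-i)}^{\binom{2k}{2}}\gamma_{\mu_1}\cdots\gamma_{\mu_{2k}}$ together with the fact that $k(k+1)$ is even. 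Thus $h_{\boldsymbol\mu} = 2^{-\nummodes}$ for $\boldsymbol\mu\in\mathcal D_{2\nummodes}$ and $h_{\boldsymbol\mu} = 0$ otherwise.

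Next I would substitute these coefficients into \cref{cor:shadow-norm-bound}. Indexing each $\boldsymbol\mu\in\mathcal D_{2\nummodes}$ by a set $S\subseteq[\nummodes]$, the intersection $\abs{\boldsymbol\mu\cap\boldsymbol\mu'} = 2\abs{S\cap S'}$ is automatically even, so every pair contributes. Grouping pairs $(S, S')$ by $k = \abs{S}$, $k' = \abs{S'}$, and $a = \abs{S\cap S'}$, and using that the number of such pairs is exactly the multinomial $\binom{\nummodes}{a, k-a, k'-a,\nummodes-k-k'+a}$, the bound becomes
\begin{align}
\norm{{\left(\ket{0}\bra{0}\right)}^{\otimes\nummodes}}_{\mathcal U}^2
\leq \frac{1}{2^{2\nummodes}}\sum_{k, k', a}\binom{\nummodes}{a, k-a, k'-a,\nummodes-k-k'+a}\,\kappa_{\nummodes, k, k', a}.
\end{align}
To evaluate this I would first simplify $\kappa$: splitting the doubled multinomial as $\binom{2\nummodes}{2a, 2b, 2c, 2d} = \binom{2\nummodes}{2k}\binom{2k}{2a}\binom{2(\nummodes-k)}{2c}$ (and likewise for $\binom{\nummodes}{a, b, c, d}$, with $b = k-a$, $c = k'-a$) yields $\kappa_{\nummodes, k, k', a} = \lambda_{k, a}\lambda_{\nummodes-k, k'-a}\lambda_{\nummodes, k'}^{-1}$, where $\lambda_{j, i} = \binom{j}{i}/\binom{2j}{2i}$. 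Each weighted summand then reduces to a product of central binomial coefficients through $\binom{j}{i}\lambda_{j, i} = \binom{2i}{i}\binom{2(j-i)}{j-i}/\binom{2j}{j}$, so the sum over $a$ is a convolution collapsed by the central binomial identity $\sum_i\binom{2i}{i}\binom{2(j-i)}{j-i} = 4^j$. As a clean check, the diagonal terms $S = S'$ contribute $\tfrac{1}{2^{2\nummodes}}\sum_k\binom{2\nummodes}{2k} = \tfrac12$; carrying the evaluation through gives the full sum $2^{2\nummodes-1}(\nummodes + 1)$, hence $\norm{\cdots}_{\mathcal U}^2\leq(\nummodes+1)/2\leq 2\nummodes$, as claimed.

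The algebra in the first two steps is routine; I expect the main obstacle to be the nested combinatorial sum. The cleanest route reduces it to convolutions of central binomial coefficients, so the key lever is the identity $\sum_i\binom{2i}{i}\binom{2(j-i)}{j-i} = 4^j$. The exact value $2^{2\nummodes-1}(\nummodes+1)$ is more than is needed: any upper bound of order $2^{2\nummodes}\nummodes$ on the sum already yields the stated $\sqrt{2\nummodes}$, leaving comfortable room.
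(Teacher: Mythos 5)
Your proposal follows the paper's proof almost step for step: the same expansion $\ket{\mathbf 0}\bra{\mathbf 0}=2^{-\nummodes}\sum_{\boldsymbol\mu\in\mathcal D_{2\nummodes}}\Gamma_{\boldsymbol\mu}$, the same reduction via \cref{cor:shadow-norm-bound} to a double sum over $(k,k',a)$ with multinomial weights, the same identity $\binom{j}{i}^2/\binom{2j}{2i}=\binom{2i}{i}\binom{2(j-i)}{j-i}/\binom{2j}{j}$, and the same convolution $\sum_i\binom{2i}{i}\binom{2(j-i)}{j-i}=4^j$. The one place your sketch is too quick is the claim that ``the sum over $a$ is a convolution collapsed by the central binomial identity'': after your (correct) factorization $\kappa_{\nummodes,k,k',a}=\lambda_{k,a}\,\lambda_{\nummodes-k,k'-a}\,\lambda_{\nummodes,k'}^{-1}$, the factor $\lambda_{\nummodes,k'}^{-1}=\lambda_{\nummodes,a+b}^{-1}$ couples $a$ to $b=k'-a$, so the $a$-sum at fixed $k,b$ is not a pure convolution and does not collapse to $4^k$ by itself. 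The paper resolves exactly this point by first applying \cref{fact:kappa-bound} (in essence $\binom{2\nummodes}{2k'}\leq 2^{\nummodes}\binom{\nummodes}{k'}$), which discards the coupling at the cost of a factor $2^{\nummodes}$ and decouples the $a$- and $b$-sums; it then needs one further estimate, $\binom{2k}{k}^{-1}\leq\sqrt{k}\,2^{1-2k}$, to turn the resulting $2^{\nummodes}\sum_k\binom{\nummodes}{k}\big/\bigl(\binom{2k}{k}\binom{2(\nummodes-k)}{\nummodes-k}\bigr)$ into $2\nummodes$ --- a step your outline omits. Your claimed exact value $2^{2\nummodes-1}(\nummodes+1)$ for the coupled sum (hence the sharper bound $(\nummodes+1)/2$) does check out numerically for small $\nummodes$ and would strengthen the lemma, but it requires a genuinely more careful evaluation than the decoupled convolution you describe. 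With the decoupling inequality and the central-binomial bound supplied, your argument is the paper's proof.
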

\noindent
\cref{lem:computation,lem:projector-norm} are proven in \cref{sec:computation,sec:shadow-norm-bounds}, respectively.

With the statistics of the estimators addressed by \cref{thm:classical-shadows} and their computability addressed by \cref{lem:computation}, we immediately get the following theorem.

\begin{theorem}\label{thm:projector-estimation}
Let $\rho$ be a normalized state and ${\left(\ket{\psi_i}\right)}_{i=1}^L$ a set of $L$ Slater determinants.
Then 
\begin{align}
\trace{\rho \ket{\psi_i} \bra{\psi_1}}, \ldots, \trace{\rho \ket{\psi_i} \bra{\psi_L}}
\end{align}
can be estimated with additive error $\epsilon$ and with probability at least $1 - \delta$ using 
\begin{align*}
O\left(s^2 \log(L / \delta) / \epsilon^2\right)
\end{align*}
samples
and $\poly(\nummodes, \epsilon^{-1}, \log(\delta^{-1}), s)$
classical processing time,
where $s = \norm{
{\left(\ket{0} \bra{0}\right)}^{\otimes \nummodes}
}_{\mathcal U}
$
Furthermore, 
\begin{align*}
\trace{\rho \ket{\mathbf 0} \bra{\psi_1}}, \ldots, \trace{\rho \ket{\mathbf 0} \bra{\psi_L}}
\end{align*}
can be estimated with additive error $\epsilon$ and with probability at least $1 - \delta$ with the same sample and time complexity but using 
$s = \norm{
{\left(\ket{0} \bra{1}\right)}^{\otimes \numelec}
\otimes
{\left(\ket{0} \bra{0}\right)}^{\otimes (\nummodes - \numelec)}
}_{\mathcal U}
$.
\end{theorem}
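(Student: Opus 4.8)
The plan is to combine three already-assembled ingredients: the generic sample bound of \cref{thm:classical-shadows}, the exact poly-time computability of \cref{lem:computation}, and the fact that every Slater determinant gives rise to the same shadow norm $s$. For the first claim I would take the $L$ observables to be the fidelity projectors $O_i = \ket{\psi_i}\bra{\psi_i}$ and feed them directly into \cref{thm:classical-shadows}. For the second claim the target $\trace{\rho\ket{\mathbf 0}\bra{\psi_i}}$ is in general complex, so I would estimate its real and imaginary parts separately using the $2L$ Hermitian observables $\tfrac12(\ket{\mathbf 0}\bra{\psi_i} + \ket{\psi_i}\bra{\mathbf 0})$ and $\tfrac1{2i}(\ket{\mathbf 0}\bra{\psi_i} - \ket{\psi_i}\bra{\mathbf 0})$; since $\log(2L/\delta)$ and $\log(L/\delta)$ agree up to a constant, replacing $L$ by $2L$ leaves the stated complexity unchanged.

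The central step is to show that each of these observables has shadow norm $s$. I would use the invariance of the shadow norm under conjugation by an arbitrary matchgate (fermionic Gaussian) unitary. Writing an $\numelec$-electron Slater determinant as $\ket{\psi_i} = V_i\ket{\mathbf 0}$ for a Gaussian $V_i$ gives $\ket{\psi_i}\bra{\psi_i} = V_i(\ket{\mathbf 0}\bra{\mathbf 0})V_i^\dagger$, so $\norm{\ket{\psi_i}\bra{\psi_i}}_{\mathcal U} = \norm{\ket{\mathbf 0}\bra{\mathbf 0}}_{\mathcal U} = s$, where $\ket{\mathbf 0}\bra{\mathbf 0} = (\ket0\bra0)^{\otimes\nummodes}$. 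For the second claim I would instead pick a \emph{number-conserving} Gaussian $W_i$ with $W_i\ket{\mathbf e} = \ket{\psi_i}$, where $\ket{\mathbf e} = \ket{1^\numelec 0^{\nummodes-\numelec}}$; such a $W_i$ fixes the vacuum up to a phase, so up to an irrelevant global phase $\ket{\mathbf 0}\bra{\psi_i}$ is the matchgate conjugate of $\ket{\mathbf 0}\bra{\mathbf e} = (\ket0\bra1)^{\otimes\numelec}\otimes(\ket0\bra0)^{\otimes(\nummodes-\numelec)}$, and its two Hermitian parts inherit shadow norm at most $s = \norm{\ket{\mathbf 0}\bra{\mathbf e}}_{\mathcal U}$.

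I expect the matchgate-invariance $\norm{VHV^\dagger}_{\mathcal U} = \norm{H}_{\mathcal U}$ to be the main obstacle, since it does \emph{not} follow from naive covariance of the measurement protocol: the affine-matchgate ensemble $\mathcal U$ is closed under right multiplication only by its own elements, whereas $V_i$ and $W_i$ are generally non-affine (indeed particle-number-changing) Gaussians. I would establish it in two steps. First, since $\mathcal M_{\mathcal U}$ acts as a scalar on each fixed-degree Majorana subspace (the structure underlying \cref{cor:rdm-shadow-norm}) and Gaussian conjugation preserves this grading, $\mathcal M_{\mathcal U}^{-1}$ is matchgate-covariant, $\mathcal M_{\mathcal U}^{-1}(VXV^\dagger) = V\mathcal M_{\mathcal U}^{-1}(X)V^\dagger$; substituting this into the single-sample estimator converts the replacement $(H,\sigma)\mapsto(VHV^\dagger, V\sigma V^\dagger)$ into replacing each measurement unitary $U\in\mathcal U$ by $UV$. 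Second, the resulting $\mathcal U$-average of a degree-$(3,3)$ function of $UV$ equals its untranslated counterpart because the affine-matchgate ensemble reproduces the third moment of the matchgate-group Haar measure (a matchgate $3$-design), so both averages coincide with the right-invariant Haar-matchgate average of the same function; this is exactly the design property underlying \cref{thm:moments}. Since $\sigma\mapsto V\sigma V^\dagger$ is a bijection of the state space, maximizing over $\sigma$ then yields the invariance.

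It remains to settle computability and assemble the bound. On each sample the measurement yields $\ket\phi = \hat U^\dagger\ket{\hat{\mathbf b}}$, which is a pure fermionic Gaussian state because $\hat U$ is a matchgate unitary and the computational-basis state $\ket{\hat{\mathbf b}}$ is itself a Slater determinant; the single-sample estimator is therefore $\trace{\mathcal M_{\mathcal U}^{-1}(\ket{\psi_i}\bra{\psi_i})\ket\phi\bra\phi}$ in the first case and $\trace{\mathcal M_{\mathcal U}^{-1}(\ket{\mathbf 0}\bra{\psi_i})\ket\phi\bra\phi}$ in the second, each of which \cref{lem:computation} evaluates exactly in $\poly(\nummodes)$ time. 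Because the sample-mean hypothesis of \cref{thm:classical-shadows} can fail for these rank-one observables, I would use its median-of-means branch; substituting $\max_i\norm{O_i}_{\mathcal U}^2 = s^2$ gives $O(s^2\log(L/\delta)/\epsilon^2)$ samples, and since each such sample is post-processed in $\poly(\nummodes)$ time, the overall classical cost is $\poly(\nummodes,\epsilon^{-1},\log(\delta^{-1}),s)$, as claimed.
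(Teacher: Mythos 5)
Your high-level route is the same as the paper's: the paper dispatches this theorem in a single sentence by combining \cref{thm:classical-shadows} (statistics, necessarily via the median-of-means branch since the sample-mean condition fails for rank-one projectors) with \cref{lem:computation} (computability), which are exactly the two ingredients you assemble; your splitting of the non-Hermitian $\ket{\mathbf 0}\bra{\psi_i}$ into two Hermitian parts and your use of a number-conserving Gaussian that fixes the vacuum up to phase mirror what the paper does inside the proof of \cref{lem:computation}.

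The one substantive issue is the step you yourself identify as the crux: $\norm{VHV^{\dagger}}_{\mathcal U} = \norm{H}_{\mathcal U}$ for a general (non-affine) Gaussian $V$. Your reduction of this to a right-invariance statement for the $\mathcal U$-average of a third-moment quantity is correct, but your claimed justification --- that the affine-matchgate ensemble reproduces the third moment of the Haar measure on the full matchgate group, and that this is ``the design property underlying \cref{thm:moments}'' --- is not supported by anything in the paper. \cref{thm:moments} is proven by a direct combinatorial computation over permutations of $[2\nummodes]$ for the discrete ensemble alone; it neither states nor implies agreement of that ensemble's third moment with the continuous Gaussian group's, and its answer is expressed through the intersection size $\abs{\boldsymbol \mu \cap \boldsymbol \mu'}$, which is not preserved by the orthogonal rotation of Majorana modes induced by $V$, so the invariance cannot simply be read off the formula. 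The third-moment-matching fact is true and is established elsewhere in the literature, but as written your argument rests on an external claim that this paper never proves; to be fair, the paper is equally silent on this point, so you have correctly located the one real gap in the ``immediate'' derivation --- you just have not closed it. A self-contained completion would either prove the moment-matching property for the affine-matchgate ensemble or bound $\max_{\sigma} \expectation{\hat h^2}$ for $H = \ket{\psi_i}\bra{\psi_i}$ directly from \cref{thm:moments} using the rotated Majorana coefficients, redoing the computation of \cref{lem:projector-norm} in that setting.
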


\begin{remark}\label{rmrk:projector-shadow-norm}
Numerically,  it appears that 
$\norm{
    {\left(
    \ket{0} \bra{1}
    \right)}^{\numelec}
    {\left(
    \ket{1} \bra{1}
    \right)}^{\nummodes - \numelec}
}_{\mathcal U}
\leq \frac12 \nummodes^c$ for $c < 1/2$.
See \cref{sec:numerics}.
\end{remark}

Estimating $X$-type observables of the form $\ket{\mathbf 0} \bra{\psi}$ for Slater determinant $\ket{\psi}$ was a major bottleneck 
in a recently proposed fermionic Monte Carlo algorithm\cite{huggins2021unbiasing}.
In the absence of a protocol to estimate $X$-type observables, they used classical shadows based on global Cliffords.
Doing so leads to tractable sample complexity, but requires computing the overlap of a Clifford state and a fermionic Gaussian state with inverse-exponential additive error, for which there is no known method (and which is probably \#P-hard).
\cref{thm:projector-estimation} and numerical evidence (\cref{rmrk:projector-shadow-norm}) suggest that this bottleneck can be overcome.

\paragraph{A smaller ensemble}
As shown in \cite{zhao2021fermionic}, the each unitary $U$ in the affine-Gaussian ensemble $\mathcal U$ corresponds to a permutation $\mathbf p \in \perm(2\nummodes)$ on the $2m$ single-mode Majorana operators.
We denote this correspondence by $U(\mathbf p)$, such that $U(\mathbf p) \Gamma_{\boldsymbol \mu} U(\mathbf p)^{-1} = \Gamma_{\mathbf p(\boldsymbol \mu)}$.
While using the full group $\perm(2\nummodes)$ of $(2\nummodes)!$ unitaries is useful for analysis, in practice a smaller ensemble of 
$\frac{(2\nummodes)!}{2^{\nummodes} \nummodes!}$ 
unitaries yields the exact same channel.
In \cite{huggins2021unbiasing}, the analogous fact for the global-Clifford ensemble was used to significantly reduce the cost of the measurement circuits.
We conjecture that something similar can be done for the affine-matchgate ensemble, reducing the circuit depth from $\sim \nummodes$ to $\sim \nummodes  / 2$.

The basic idea is that conditioned on a particular $\mathbf p \in \perm(2\nummodes)$, the channel that applies $U(\mathbf p)$, measures in the computational basis, and then applies $U(\mathbf p)^{\dagger}$ depends only on $\perfmatch(\mathbf p)$, where
\begin{align}
\perfmatch(\mathbf p) &= \left\{\left\{p_{2i - 1}, p_{2i}\right\} \middle| i \in [\nummodes]\right\}
\end{align}
is the perfect matching of $[2\nummodes]$ obtained by pairing up adjacent elements in the permutation.
Let $\perfmatch(2\nummodes)$ be the set of all perfect matchings of $[2\nummodes]$ (technically, of the complete graph with $2\nummodes$ vertices).
For each perfect matching $E \in \perfmatch(2\nummodes)$, there is the same number of permutations $\mathbf p \in \perm(2\nummodes)$ such that $\perfmatch(\mathbf p) = E$, and so it suffices to sample from $\perfmatch(2\nummodes)$ and for each one select a representative permutation, as captured by \cref{thm:smaller-ensembles}, whose proof is in \cref{sec:smaller-ensembles}.

\begin{theorem}\label{thm:smaller-ensembles}
Let $\mathcal P \subset \perm(2\nummodes)$  be a set of 
$\frac{(2\nummodes)!}{2^{\nummodes} \nummodes!}$ 
permutations such that 
\begin{align}
\left\{\perfmatch(\mathbf p) \middle| \mathbf p \in \mathcal P\right\} 
&= \perfmatch(2\nummodes).
\end{align}
Then the channel
\begin{align}
\mathcal M_{\mathcal P}(\rho) &= \expectationover*{
\mathbf p \in \mathcal P
}{
\sum_{\mathbf b \in {\{0, 1\}}^{\nummodes}}
\braket{ \mathbf b | U(\mathbf p)  \rho {U(\mathbf p)}^{\dagger}| \mathbf b}
{U(\mathbf p)}^{\dagger}    \ket{\mathbf b}\bra{\mathbf b} U(\mathbf p)
}
=
\mathcal M(\rho)
\end{align}
is the same as when using the full permutation group.
\end{theorem}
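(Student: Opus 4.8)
The plan is to collapse the full-group average into a uniform average over perfect matchings, in two movements: first, that the channel conditioned on a single permutation $\mathbf p$ depends on $\mathbf p$ only through $\perfmatch(\mathbf p)$; and second, that each matching is the image of the same number of permutations. Writing $\mathcal M_{\mathbf p}$ for the channel conditioned on $\mathbf p$, I have
\begin{align}
\mathcal M_{\mathbf p}(\rho)
&=
U(\mathbf p)^{\dagger}\,
\mathcal D_0\!\left(U(\mathbf p)\,\rho\,U(\mathbf p)^{\dagger}\right)
U(\mathbf p),
\end{align}
where $\mathcal D_0(\cdot) = \sum_{\mathbf b}\ket{\mathbf b}\bra{\mathbf b}(\cdot)\ket{\mathbf b}\bra{\mathbf b}$ is the computational-basis dephasing channel, so that $\mathcal M = \frac{1}{(2\nummodes)!}\sum_{\mathbf p}\mathcal M_{\mathbf p}$.

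For the first movement I would analyze all three factors in the Majorana basis $\{\Gamma_{\boldsymbol\mu}\}$. The channel $\mathcal D_0$ is the orthogonal projector onto the operators diagonal in the computational basis, and these are spanned exactly by $\{\Gamma_{\boldsymbol\nu}:\boldsymbol\nu\in\mathcal D_{2\nummodes}\}$: each generator $\Gamma_{(2i-1,2i)} = -i\gamma_{2i-1}\gamma_{2i}$ is the mode-$i$ number operator up to an affine shift, and products of these generate all diagonal Majorana operators. Conjugation by $U(\mathbf p)$ acts as $\Gamma_{\boldsymbol\mu}\mapsto\pm\Gamma_{\mathbf p(\boldsymbol\mu)}$, so composing the three maps (with the inner and outer conjugations cancelling) gives
\begin{align}
\mathcal M_{\mathbf p}(\Gamma_{\boldsymbol\mu})
&=
\begin{cases}
\Gamma_{\boldsymbol\mu}, & \mathbf p(\boldsymbol\mu)\in\mathcal D_{2\nummodes},\\
0, & \text{otherwise}.
\end{cases}
\end{align}
Thus $\mathcal M_{\mathbf p}$ is a coordinate projector in the Majorana basis, and its surviving coordinates are exactly the index sets $\boldsymbol\mu$ that are unions of edges of a single perfect matching: the condition $\mathbf p(\boldsymbol\mu)\in\mathcal D_{2\nummodes}$ says that whenever $\boldsymbol\mu$ contains one endpoint of a matching edge it contains the other. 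Hence $\mathcal M_{\mathbf p}$ depends on $\mathbf p$ only through the matching $\perfmatch(\mathbf p)$, and I may define $\mathcal M_E := \mathcal M_{\mathbf p}$ for any representative $\mathbf p$ with $\perfmatch(\mathbf p)=E$.

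For the second movement I would count the fibers of $\mathbf p\mapsto\perfmatch(\mathbf p)$. Every matching $E$ has exactly $2^{\nummodes}\nummodes!$ preimages: to build a one-line permutation realizing $E$, assign its $\nummodes$ edges to the $\nummodes$ position-blocks $(2i-1,2i)$ in $\nummodes!$ ways and order the two endpoints within each block in $2$ ways. Since $\abs{\perfmatch(2\nummodes)} = (2\nummodes)!/(2^{\nummodes}\nummodes!)$, this exhausts all $(2\nummodes)!$ permutations, and the full-group channel collapses to
\begin{align}
\mathcal M(\rho)
&=
\frac{1}{(2\nummodes)!}\sum_{\mathbf p\in\perm(2\nummodes)}\mathcal M_{\mathbf p}(\rho)
=
\frac{1}{\abs{\perfmatch(2\nummodes)}}\sum_{E\in\perfmatch(2\nummodes)}\mathcal M_E(\rho).
\end{align}
Finally, because $\mathcal P$ has exactly $\abs{\perfmatch(2\nummodes)}$ elements and $\{\perfmatch(\mathbf p):\mathbf p\in\mathcal P\}=\perfmatch(2\nummodes)$, the restriction of $\perfmatch$ to $\mathcal P$ is a bijection, so $\mathcal M_{\mathcal P}$ averages each $\mathcal M_E$ exactly once and equals $\mathcal M$.

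The main obstacle is the first movement — proving that $\mathcal M_{\mathbf p}$ sees only its matching — while the counting and the final bijection are routine. Within it, the delicate bookkeeping is threefold: identifying $\mathcal D_{2\nummodes}$ as precisely the fixed space of $\mathcal D_0$; tracking the sign in $\Gamma_{\boldsymbol\mu}\mapsto\pm\Gamma_{\mathbf p(\boldsymbol\mu)}$ to confirm that it cancels between the inner and outer conjugations, leaving a genuine $\{0,1\}$-projector; and, most importantly, pinning down the forward/inverse convention in the definition of $U(\mathbf p)$ so that the matching the channel actually depends on is the same $\perfmatch(\mathbf p)$ that is constrained in the hypothesis on $\mathcal P$. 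This last point is not merely cosmetic: the fiber-counting closes only if $\mathcal P$ hits each matching \emph{that the channel sees} exactly once, so the convention must be tracked consistently from the commutation relation of $U(\mathbf p)$ through to the statement.
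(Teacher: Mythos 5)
Your proof is correct, but it takes a genuinely different route from the paper's. The paper never identifies what the conditional channel $\mathcal M_{\mathbf p}$ actually \emph{is}; instead it decomposes $\perm(2\nummodes) = \mathcal F \cdot \mathcal S \cdot \mathcal P$, where $\mathcal F$ is the group of $2^{\nummodes}$ within-pair swaps (whose unitaries are layers of $X$ gates) and $\mathcal S$ is the group of $\nummodes!$ permutations of whole pairs (whose unitaries are built from SWAPs), and shows directly that left-composing $\mathbf p$ with any element of $\mathcal F$ or $\mathcal S$ leaves $\mathcal M_{\mathbf p}$ unchanged, simply by relabeling the summation variable $\mathbf b \mapsto \mathbf b \oplus \mathbf x$ or $\mathbf b \mapsto \mathbf s^{-1}(\mathbf b)$. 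That argument needs no Majorana-basis analysis and no sign tracking, and it makes explicit which circuit layers are ``free'' --- which is the point for compilation. Your argument instead diagonalizes $\mathcal M_{\mathbf p}$ in the Majorana basis and exhibits it as the coordinate projector onto $\{\Gamma_{\boldsymbol\mu} : \mathbf p(\boldsymbol\mu) \in \mathcal D^*_{2\nummodes}\}$, whose support visibly depends only on the induced matching; this reuses the machinery of the paper's eigenvalue computation and yields strictly more information about the conditional channel, at the cost of the sign cancellation (which does hold --- $\mathcal M_{\mathbf p}$ is a composition of a unitary conjugation, an orthogonal pinching, and the inverse conjugation, hence a positive projector, so the eigenvalue on each surviving $\Gamma_{\boldsymbol\mu}$ must be $+1$) and of the forward-versus-inverse convention you rightly flag: the matching the channel sees is $\{\{\mathbf p^{-1}(2i-1), \mathbf p^{-1}(2i)\}\}_i$, which is $\perfmatch(\mathbf p)$ or $\perfmatch(\mathbf p^{-1})$ depending on how one-line notation is read, though since inversion is a fiber-size-preserving bijection this does not threaten the counting. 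Both proofs ultimately rest on the same fact that each matching has exactly $2^{\nummodes}\nummodes!$ preimages; the paper packages it as the unique decomposition $\mathcal F \cdot \mathcal S \cdot \mathcal P$, you as a direct fiber count.
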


\paragraph{Learning a Slater determinant}

Aaronson and Grewal~\cite{aaronson2021efficient} attempted to learn a Slater determinant using only measurements in the computational basis.
In fact, that is not sufficient information.
Consider the two $\numelec$-electron, $2\numelec$-mode Slater determinants (in second quantization)
\begin{align}
\ket{\mathrm{bad},n,\pm} &= 2^{-n/2} {\left(\ket{01} \pm \ket{10}\right)}^{\otimes n}.
\label{eq:counterexample}
\end{align}
They are orthogonal,  $\braket{\mathrm{bad},n,+\mathrm{bad},n,-} = 0$, but have identical distributions when measured in the computational basis.

However, it is well-known that a Slater determinant is uniquely specified by its 2-RDMs.
In \cref{sec:slater-determinant}, We derive the following theorem, giving a rigorous, quantitative upper bound on the number of samples necessary to learn a Slater determinant.

\begin{restatable}{theorem}{sdlearningthm}
\label{thm:slater-determinant-learning}
Let $\ket{\psi}$ be an $\nummodes$-mode, $\numelec$-electron Slater determinant.
For any $\delta > 0 $ and any $\epsilon_{\mathrm{fid}} \in (0, \numelec / \nummodes]$,
there is a $\poly(m, \epsilon_{\mathrm{fid}}^{-1}, \delta^{-1})$-time quantum algorithm that 
consumes 
$O(\numelec^2 \nummodes^7 \log(\nummodes / \delta) / \epsilon_{\mathrm{fid}}^2)$
copies of $\ket{\psi}$
and
outputs a classical description of a Slater determinant $\ket{\tilde{\psi}}$ such that ${\left|\braket{\estimate{\psi} | \psi}\right|}^2 \geq 1 - \epsilon_{\mathrm{fid}}$ with probability at least $1 - \delta$.
\end{restatable}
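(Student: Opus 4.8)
The plan is to reduce learning $\ket{\psi}$ to estimating, and then rounding, its one-particle reduced density matrix (1-RDM). An $\numelec$-electron Slater determinant is uniquely determined by the 1-RDM $D_{ij} = \bra{\psi} a_i^{\dagger} a_j \ket{\psi}$, a rank-$\numelec$ orthogonal projector on $\mathbb{C}^{\nummodes}$ (equivalently, by the associated degree-2 Majorana expectations). The counterexample in \cref{eq:counterexample} shows why the diagonal of $D$ alone --- all that computational-basis sampling reveals --- is insufficient: it is the off-diagonal entries, encoding the orientation of the occupied subspace, that distinguish such states, and these are exactly what the affine-matchgate classical shadows supply. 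The algorithm therefore has three stages: (i) estimate every entry of $D$ from classical shadows; (ii) Hermitize the estimate and round it to the nearest rank-$\numelec$ projector $\tilde{D}$, which specifies the output Slater determinant $\ket{\tilde{\psi}}$; and (iii) bound $1 - \abs{\braket{\tilde{\psi} | \psi}}^2$ in terms of the estimation error.

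For stage (i) I would invoke \cref{thm:k-RDMs} specialized to $k = 1$: every degree-2 Majorana expectation, hence every entry of $D$, can be estimated to additive error $\epsilon_1$ with probability $1 - \delta$ using $O(\nummodes \log(\nummodes / \delta) / \epsilon_1^2)$ samples, with estimators that are efficiently computable (\cref{lem:computation}). The per-entry bound gives $\norm{\hat{D} - D}_F \le \nummodes \epsilon_1$ and hence $\norm{\hat{D} - D}_{\mathrm{op}} \le \nummodes \epsilon_1$. For stage (ii), Hermitizing and keeping the top-$\numelec$ eigenvectors yields the operator-norm-nearest rank-$\numelec$ projector $\tilde{D}$; since $D$ is itself such a projector, the triangle inequality gives $\norm{\tilde{D} - D}_{\mathrm{op}} \le 2 \norm{\hat{D} - D}_{\mathrm{op}}$. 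This thresholding identifies the correct subspace only while the estimation error stays below the unit gap separating the occupied eigenvalue $1$ from the empty eigenvalue $0$ of $D$; the restriction $\epsilon_{\mathrm{fid}} \le \numelec / \nummodes$ is what keeps $\epsilon_1$ in this regime.

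Stage (iii) is the crux. I would pass to the principal angles $\theta_1, \dots, \theta_{\numelec}$ between the occupied subspaces of $D$ and $\tilde{D}$, for which $\abs{\braket{\tilde{\psi} | \psi}}^2 = \prod_i \cos^2 \theta_i$, $\tfrac{1}{2} \norm{\tilde{D} - D}_F^2 = \sum_i \sin^2 \theta_i$, and $\norm{\tilde{D} - D}_{\mathrm{op}} = \max_i \sin \theta_i$. The elementary inequality $\prod_i \cos^2 \theta_i \ge 1 - \sum_i \sin^2 \theta_i$ then yields $1 - \abs{\braket{\tilde{\psi} | \psi}}^2 \le \numelec \norm{\tilde{D} - D}_{\mathrm{op}}^2 \le 4 \numelec \nummodes^2 \epsilon_1^2$; setting the right-hand side to $\epsilon_{\mathrm{fid}}$ fixes $\epsilon_1$ and thus the sample count. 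The main obstacle is controlling this chain tightly enough: each link --- converting per-entry shadow error into operator/Frobenius error of $D$, surviving the rounding, and collapsing the product over principal angles --- contributes factors of $\nummodes$ and $\numelec$, and the conservative accounting required for a rigorous guarantee (rather than the optimistic version sketched here) is what produces the stated $O(\numelec^2 \nummodes^7 \log(\nummodes / \delta) / \epsilon_{\mathrm{fid}}^2)$ bound.

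Finally, classical efficiency is immediate: assembling $\hat{D}$ from the shadows, computing its eigendecomposition, and preparing a description of $\ket{\tilde{\psi}}$ all cost $\poly(\nummodes)$, with \cref{lem:computation} handling any Gaussian-state overlaps, so the total classical running time is $\poly(\nummodes, \epsilon_{\mathrm{fid}}^{-1}, \delta^{-1})$ as claimed.
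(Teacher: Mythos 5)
Your proposal is correct, and its first two stages coincide with the paper's: both estimate the full 1-RDM $R_{ij}=\braket{\psi|a_i^{\dagger}a_j|\psi}$ entrywise via \cref{thm:k-RDMs} with $k=1$, observe that $R=U^{\dagger}\Pi_{\nummodes,\numelec}U$ is a rank-$\numelec$ projector, and output the Slater determinant built from the top $\numelec$ eigenvectors of the Hermitian estimate. The error analysis, however, is genuinely different. The paper writes ${\left|\braket{\estimate{\psi}|\psi}\right|}^2=\det\left(I+{\left(D-\estimate{E}\right)}_{[\numelec],[\numelec]}\right)$, controls the eigenvalue displacement $D$ with the Gershgorin circle theorem and the rotated error $\estimate{E}$ with the Frobenius norm, and lower-bounds the determinant by $1-\numelec\,\|D-\estimate{E}\|$ --- a bound \emph{linear} in the operator-norm error, which after the $\nummodes^3$ Gershgorin factor forces $\epsilon_{\mathrm{shdw}}=\Theta\left(\epsilon_{\mathrm{fid}}/(\numelec\nummodes^3)\right)$ and hence the stated $O(\numelec^2\nummodes^7\log(\nummodes/\delta)/\epsilon_{\mathrm{fid}}^2)$. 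Your principal-angle route instead gives $1-{\left|\braket{\tilde{\psi}|\psi}\right|}^2\leq\sum_i\sin^2\theta_i=\tfrac12\|\tilde D-D\|_F^2$, which is \emph{quadratic} in the estimation error; with $\|\tilde D-D\|_F\leq 2\|\hat D-D\|_F\leq 2\nummodes\epsilon_1$ this needs only $\epsilon_1=\Theta(\sqrt{\epsilon_{\mathrm{fid}}}/\nummodes)$ and roughly $O(\nummodes^3\log(\nummodes/\delta)/\epsilon_{\mathrm{fid}})$ samples. That is \emph{stronger} than the claimed bound, so your closing worry that "conservative accounting" would be needed to reach the stated complexity is misplaced --- the theorem follows a fortiori, and you avoid Gershgorin entirely. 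To make the argument rigorous you only need to state explicitly (a) that the top-$\numelec$ eigenprojector is the Frobenius-nearest rank-$\numelec$ projector to $\hat D$ (Ky Fan), which justifies the factor-of-two triangle inequality, and (b) the standard identities $\left|\braket{\tilde{\psi}|\psi}\right|=\prod_i\cos\theta_i$ and $\|\tilde D-D\|_F^2=2\sum_i\sin^2\theta_i$ for equal-dimensional subspaces. Note also that the spectral-gap/threshold condition you invoke is not actually needed: the projector-distance bound holds unconditionally, and the fidelity bound is merely vacuous when the error is large.
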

The quantum component of the algorithm in \cref{thm:slater-determinant-learning} is extremely simple: independent measurements of the copies of the input state in a random affine-Gaussian basis.
The measurement outcomes are then processed completely classically.

 \section{Acknowledgements}
We thank 
Sergey Bravyi for suggesting the counterexample in \cref{eq:counterexample} and for helpful discussions,
Bill Huggins for pointing out the error in the general variance expression in \cite{zhao2021fermionic} and for helpful discussions,
and Kianna Wan for helpful discussions.
 
\appendix
\section{Statistics of the estimator}

\subsection{Additional notation}

It will help to introduce a small bit of notation.
For an index sequence $\boldsymbol \mu \in \mathcal D_{2\nummodes}$ corresponding to a diagonal Majorana operator $\Gamma_{\boldsymbol \mu}$, let $\bin(\boldsymbol \mu) \in {\{0, 1\}}^{\nummodes}$ be the corresponding bitstring.
We denote the inverse operation by $\seq(\mathbf x) \in \mathcal D_{2\nummodes}$. 
That is,
\begin{align}
\Gamma_{\boldsymbol \mu} &= \prod_{i=1}^{\nummodes} {\left(-i \gamma_{2i - 1} \gamma_{2i} \right)}^{\bin(\boldsymbol \mu)_i},
&
\Gamma_{\seq(\mathbf x)} &= \prod_{i=1}^{\nummodes} {\left(-i \gamma_{2i - 1} \gamma_{2i} \right)}^{x_i}.
\end{align}

Let 
\begin{align}
\mathcal D^*_{2\nummodes, 2j}
&=
\left\{
\boldsymbol \mu \in {[2m]}^{2j}
: \exists \mathbf p \in \perm(2\nummodes): \mathbf p(\boldsymbol \mu) \in \mathcal D_{2\nummodes, 2j}
\right\},
&
\mathcal D^*_{2\nummodes}
&=
\bigcup_{j=0}^{\nummodes}
\mathcal D^*_{2\nummodes, 2j}.
\end{align}
That is $\mathcal D^*_{2\nummodes}$ contains all $\boldsymbol \mu$ such that $\Gamma_{\boldsymbol \mu} = (-1)^{\binom{\abs{\boldsymbol \mu}}{2}} \gamma_{\mu_1} \cdots \gamma_{\mu_{\abs{\boldsymbol \mu}}}$ is diagonal, including those for which it is not the case that $\mu_1 < \cdots < \mu_{\abs{\boldsymbol \mu}}$.
If $\boldsymbol \mu \in \mathcal D^*_{2\nummodes}$, then there is a unique $\boldsymbol \mu' \in \mathcal D_{2\nummodes}$ such that $\Gamma_{\boldsymbol \mu} = \pm \Gamma_{\boldsymbol \mu'}$.
Define $\sign(\boldsymbol \mu) \in \{\pm 1\}$ to be the corresponding sign, i.e.,
\begin{align}
\Gamma_{\boldsymbol \mu} = 
\sign(\boldsymbol \mu) \Gamma_{\boldsymbol \mu'} = 
\sign(\boldsymbol \mu) \Gamma_{\seq(\bin(\boldsymbol \mu))}
.
\end{align}

There is another special subset of $\mathcal C_{2\nummodes}$ of interest.
For positive integer $j \leq \nummodes$, define
\begin{align}
\mathcal B_{2\nummodes, j}
&=
\bigtimes_{i=1}^{j} 
\left\{
2i - 1, 2i
\right\}.
\end{align}
As with $\mathcal D_{2\nummodes}$, each element of $\mathcal B_{2\nummodes, j}$ can be identified by a bitstring $\mathbf x \in {\{0, 1\}}^{j}$.
Analogous to $\bin$ and $\seq$, we define $\binx$ and $\seqx$ to be the functions that map between bitstrings and $\mathcal B_{2\nummodes, j}$:
\begin{align}
\Gamma_{\boldsymbol \mu} &=
{(-i)}^{\binom{j}{2}}
\prod_{i=1}^{j}
\gamma_{2i-1+{\binx(\boldsymbol \mu)}_i}
,
&
\Gamma_{\seqx(\mathbf x)} &=
{(-i)}^{\binom{j}{2}}
\prod_{i=1}^{j}
\gamma_{2i-1+x_i}
.
\end{align}

\begin{align}
\Gamma_{\seqx(\mathbf x)}
\Gamma_{\seqx(\mathbf y)}
&=
{(-1)}^{\binom{j}{2}}
\gamma_{2i-1+x_1}
\cdots
\gamma_{2i-1+x_j}
\gamma_{2i-1+y_1}
\cdots
\gamma_{2i-1+y_j}
\\
&=
\gamma_{2i-1+x_1}
\gamma_{2i-1+y_1}
\cdots
\gamma_{2i-1+x_j}
\gamma_{2i-1+y_j}
\\
&=
\prod_{i=1}^j
{(-1)}^{x_i(1-y_i)}
{\left(
\gamma_{2i-1}
\gamma_{2i}
\right)}^{x_i \oplus y_i}
=
i^{\norm{\mathbf y}_1 - \norm{\mathbf x}_1}
\Gamma_{\seq(\mathbf x \oplus \mathbf y)}
\end{align}
 \subsection{Useful facts}
We collect here a set of useful facts for reference.

\begin{fact}
For all integers $0 \leq k \leq m$,
\begin{align}
\binom{2\nummodes}{2k} &\leq 2^{\nummodes} \binom{\nummodes}{k}
\end{align}
\end{fact}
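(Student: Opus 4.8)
The plan is to prove the equivalent statement that the ratio $r_{\nummodes} = \binom{2\nummodes}{2k}\big/\bigl(2^{\nummodes}\binom{\nummodes}{k}\bigr)$ satisfies $r_{\nummodes}\le 1$. First I would fix $k$ and regard $r_{\nummodes}$ as a sequence indexed by $\nummodes\ge k$. Since $\binom{2\nummodes}{2k}=\binom{2\nummodes}{2(\nummodes-k)}$ and $\binom{\nummodes}{k}=\binom{\nummodes}{\nummodes-k}$, the ratio is invariant under $k\mapsto \nummodes-k$, so without loss of generality $k\le \nummodes/2$, i.e.\ $\nummodes\ge 2k$. This symmetry reduction lets me treat only the regime in which the relevant binomial coefficients increase toward the center.

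The base case is the diagonal $\nummodes=2k$, where the claim reads $\binom{4k}{2k}\le 2^{2k}\binom{2k}{k}$. Here I would apply Vandermonde's identity to write $\binom{4k}{2k}=\sum_{i=0}^{2k}\binom{2k}{i}\binom{2k}{2k-i}=\sum_{i=0}^{2k}\binom{2k}{i}^2$, and then bound each factor by the central one, $\binom{2k}{i}\le\binom{2k}{k}$, to obtain $\sum_{i}\binom{2k}{i}^2\le \binom{2k}{k}\sum_{i}\binom{2k}{i}=2^{2k}\binom{2k}{k}$. This settles $r_{2k}\le 1$.

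For the step I would show that $r_{\nummodes}$ is nonincreasing for $\nummodes\ge 2k$. A direct computation of the consecutive binomial ratios gives $r_{\nummodes}/r_{\nummodes-1}=\frac{2\nummodes-1}{2(2\nummodes-2k-1)}$, whose denominator is positive and which is at most $1$ precisely when $\nummodes\ge 2k+1$. Chaining these steps down to the diagonal then yields $r_{\nummodes}\le r_{2k}\le 1$ for every $\nummodes\ge 2k$, which together with the symmetry reduction covers all $0\le k\le\nummodes$.

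The main obstacle is to get the \emph{sharp} central term $\binom{\nummodes}{k}$ rather than a looser one. The tempting shortcut—apply Vandermonde by splitting $[2\nummodes]$ into two halves of size $\nummodes$ and bound every factor by the global maximum $\binom{\nummodes}{\lfloor \nummodes/2\rfloor}$—only proves the weaker bound $\binom{2\nummodes}{2k}\le 2^{\nummodes}\binom{\nummodes}{\lfloor \nummodes/2\rfloor}$, which is loose for off-center $k$. The key point is that this max-bound becomes exact exactly on the diagonal (where the target $\binom{2k}{k}$ \emph{is} the central binomial), so the monotonicity argument is what transfers the tight diagonal estimate to general $\nummodes$. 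I would also be careful with the integer/parity bookkeeping in the ratio computation, since the inequality $r_{\nummodes}/r_{\nummodes-1}\le 1$ hinges on the threshold $\nummodes\ge 2k+1$ matching the range produced by the symmetry reduction.
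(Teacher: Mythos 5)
Your proof is correct, but note that the paper itself states this as an unproven ``Fact'' in its list of useful inequalities, so there is no in-paper argument to compare against. Your three ingredients all check out: the symmetry $\binom{2\nummodes}{2k}=\binom{2\nummodes}{2(\nummodes-k)}$, $\binom{\nummodes}{k}=\binom{\nummodes}{\nummodes-k}$ legitimately reduces to $\nummodes\ge 2k$; the Vandermonde/Cauchy base case $\binom{4k}{2k}=\sum_i\binom{2k}{i}^2\le\binom{2k}{k}2^{2k}$ is standard and tight up to a constant; and the ratio computation is right, since
\begin{align}
\frac{r_{\nummodes}}{r_{\nummodes-1}}
=\frac{2\nummodes(2\nummodes-1)}{(2\nummodes-2k)(2\nummodes-2k-1)}\cdot\frac{\nummodes-k}{\nummodes}\cdot\frac12
=\frac{2\nummodes-1}{2(2\nummodes-2k-1)}\le 1
\iff \nummodes\ge 2k+\tfrac12,
\end{align}
which for integers is exactly $\nummodes\ge 2k+1$, matching the range left after the symmetry reduction (and the denominator is positive there). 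One cosmetic remark: the same identity the paper uses later in the proof of its projector-norm lemma, namely $\binom{\nummodes}{k}^2/\binom{2\nummodes}{2k}=\binom{2k}{k}\binom{2(\nummodes-k)}{\nummodes-k}/\binom{2\nummodes}{\nummodes}$, recasts the Fact as a lower bound on $\binom{2k}{k}\binom{2(\nummodes-k)}{\nummodes-k}$ and would give an alternative route, but your induction on $\nummodes$ is self-contained and no longer, so there is nothing to fix.
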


\begin{fact}
For all integers $0 \leq y \leq x \leq n$,
\begin{align}
\frac{
\binom{n+x}{k+y}
}{
\binom{2(n+x)}{2(k+y)}
}
&\leq 
\frac{
\binom{n}{k}
}{
\binom{2n}{2k}
}.
\end{align}
\end{fact}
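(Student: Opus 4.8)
The plan is to prove the inequality by reducing it to two elementary one-step moves and verifying each by a direct telescoping of binomial ratios. Write $f(n,k) = \binom{n}{k} / \binom{2n}{2k}$ for the ratio appearing on both sides (this is the quantity $\lambda_{n,k}$ appearing in \cref{thm:moments}). The claim is then that $f(n+x, k+y) \le f(n,k)$ whenever $0 \le y \le x \le n$, i.e., that $f$ is nonincreasing along any lattice move that increases $n$ by $x$ and $k$ by $y$ with $y \le x$.

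First I would observe that any such move from $(n,k)$ to $(n+x, k+y)$ decomposes into $y$ diagonal steps $(n,k) \mapsto (n+1, k+1)$ followed by $x-y$ horizontal steps $(n,k) \mapsto (n+1, k)$; both step counts are nonnegative precisely because $0 \le y \le x$, and the constraint $0 \le k \le n$ is preserved at every intermediate point (each diagonal step leaves $n-k$ fixed and each horizontal step increases it). It therefore suffices to show $f(n+1, k+1) \le f(n,k)$ and $f(n+1, k) \le f(n,k)$ for all $0 \le k \le n$, since multiplying these single-step ratios along the path yields the composite bound.

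Each single-step inequality follows from a short computation of the ratio of consecutive values. For the diagonal step, $f(n+1,k+1)/f(n,k) = \frac{\binom{n+1}{k+1}}{\binom{n}{k}} \cdot \frac{\binom{2n}{2k}}{\binom{2n+2}{2k+2}}$; using $\binom{n+1}{k+1}/\binom{n}{k} = \frac{n+1}{k+1}$ and $\binom{2n}{2k}/\binom{2n+2}{2k+2} = \frac{(2k+2)(2k+1)}{(2n+2)(2n+1)}$, the factors of $(k+1)$ and $(n+1)$ cancel and the ratio collapses to $\frac{2k+1}{2n+1} \le 1$ since $k \le n$. For the horizontal step, $f(n+1,k)/f(n,k) = \frac{n+1}{n+1-k} \cdot \frac{(2n+2-2k)(2n+1-2k)}{(2n+2)(2n+1)}$; here the identity $2n+2-2k = 2(n+1-k)$ cancels the denominator $n+1-k$, leaving $\frac{2n+1-2k}{2n+1} \le 1$ since $k \ge 0$.

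There is no substantial obstacle here. The only point requiring care is the bookkeeping in the reduction step: checking that the decomposition into elementary moves keeps every binomial coefficient in range (so the ratios are genuine numbers rather than $0/0$) and that the constraints $0 \le k \le n$ and $k \ge 0$ used in the two single-step bounds indeed hold at each intermediate lattice point. Once the two single-step inequalities are in hand, telescoping the product along the chosen path gives $f(n+x, k+y) \le f(n,k)$ immediately.
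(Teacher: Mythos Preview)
Your argument is correct. The paper states this as a bare fact with no proof, so there is nothing to compare against; your decomposition into diagonal and horizontal unit steps, with the telescoping ratios $\frac{2k+1}{2n+1}$ and $\frac{2n+1-2k}{2n+1}$ respectively, is a clean self-contained verification. The only implicit assumption worth flagging is that $0\le k\le n$, which the paper leaves unstated but which is needed for the right-hand side to be well-defined; you handle this correctly.
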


\begin{fact}
For all integers $k \geq 0$,
\begin{align}
\binom{2k}{k}^{-1} &\leq \sqrt{k} 2^{1-2k}.
\end{align}
\end{fact}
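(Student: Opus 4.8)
The plan is to recast the inequality multiplicatively. Set $a_k = \binom{2k}{k} 2^{-2k}$, so that the claimed bound $\binom{2k}{k}^{-1} \leq \sqrt{k}\, 2^{1-2k}$ is equivalent, after multiplying through by $2^{-2k}$ and rearranging, to the lower bound
\begin{align}
a_k &\geq \frac{1}{2\sqrt{k}}.
\end{align}
I would prove this by induction on $k$, starting at $k = 1$. (The stated range $k \geq 0$ is slightly too generous: at $k = 0$ the left side is $1$ while the right side is $0$, so I would either restrict to $k \geq 1$ or note that only $k \geq 1$ is ever invoked where this fact is applied.)

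For the base case $k = 1$ we have $a_1 = \binom{2}{1} 2^{-2} = 1/2 = 1/(2\sqrt{1})$, so the bound holds with equality. For the inductive step, the key observation is that consecutive terms have a simple ratio:
\begin{align}
\frac{a_{k+1}}{a_k}
&=
\frac{\binom{2k+2}{k+1}}{4\binom{2k}{k}}
=
\frac{(2k+2)(2k+1)}{4(k+1)^2}
=
\frac{2k+1}{2(k+1)}.
\end{align}
Assuming $a_k \geq 1/(2\sqrt{k})$, it then suffices to show
\begin{align}
\frac{1}{2\sqrt{k}} \cdot \frac{2k+1}{2(k+1)}
&\geq
\frac{1}{2\sqrt{k+1}},
\end{align}
which, after clearing denominators and squaring both (positive) sides, reduces to the polynomial inequality $(2k+1)^2 \geq 4k(k+1)$. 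Expanding gives $4k^2 + 4k + 1 \geq 4k^2 + 4k$, which holds for every $k$ with slack exactly $1$.

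The main obstacle: there is no substantive obstacle here — the entire argument is driven by the exact telescoping ratio $a_{k+1}/a_k = (2k+1)/(2(k+1))$ together with the one-line comparison $(2k+1)^2 = 4k(k+1) + 1$. The only point requiring care is the $k = 0$ boundary noted above; everything else is routine algebra. The slack of $1$ in the final polynomial inequality also confirms that the bound is essentially tight, with equality attained exactly at $k = 1$. An alternative would be to invoke a standard Stirling or Wallis estimate for $\binom{2k}{k}$, but the induction is self-contained and avoids importing external constants, so I would favor it.
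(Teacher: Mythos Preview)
Your induction argument is correct and complete for $k \geq 1$: the ratio identity $a_{k+1}/a_k = (2k+1)/(2(k+1))$ and the comparison $(2k+1)^2 = 4k(k+1)+1$ are exactly what is needed. The paper states this inequality as a ``Fact'' without proof, so there is no argument to compare against; your self-contained induction is entirely adequate.

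You are also right to flag the $k=0$ boundary: the inequality as written fails there (left side $1$, right side $0$). In the paper the fact is applied termwise to the sum $\sum_{k=0}^{m}\binom{m}{k}\big/\bigl(\binom{2k}{k}\binom{2(m-k)}{m-k}\bigr)$, so strictly speaking the $k=0$ and $k=m$ terms need separate treatment, though this does not affect the final $O(m)$ bound.
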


\begin{fact}\label{fact:kappa-bound}
For all integers $0 \leq k, k' \leq \nummodes$ and $\max\{0, k  + k' - \nummodes\} \leq a \leq \min\{k, k'\}$,
\begin{align}
\kappa^{-1}_{\nummodes, k, k', a}
&\leq 
2^{\nummodes}
\frac{
\binom{k}{a}
}{
\binom{2k}{2a}
}
\frac{
\binom{\nummodes - k}{k' - a}
}{
\binom{2(\nummodes-k)}{2(k' - a)}
}
\end{align}
\end{fact}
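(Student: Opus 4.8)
The plan is to rewrite $\kappa^{-1}_{\nummodes,k,k',a}$ in a transparent product form and then reduce the whole statement to a single sub-multiplicativity inequality. Throughout write $\lambda_{n,j} = \binom{n}{j}/\binom{2n}{2j}$ for arbitrary $n$ (so $\lambda_{\nummodes,k}$ is the paper's $\lambda$), and set $b = \nummodes-k-k'+a$. First I would factor the multinomials appearing in $\lambda_{\nummodes,k,k',a}$:
\begin{align}
\binom{\nummodes}{a,k-a,k'-a,b} &= \binom{\nummodes}{k}\binom{k}{a}\binom{\nummodes-k}{k'-a}, &
\binom{2\nummodes}{2a,2(k-a),2(k'-a),2b} &= \binom{2\nummodes}{2k}\binom{2k}{2a}\binom{2(\nummodes-k)}{2(k'-a)}.
\end{align}
Dividing these gives $\lambda_{\nummodes,k,k',a} = \lambda_{\nummodes,k}\,\lambda_{k,a}\,\lambda_{\nummodes-k,k'-a}$, so the factor $\lambda_{\nummodes,k}$ cancels in $\kappa^{-1} = \lambda_{\nummodes,k}\lambda_{\nummodes,k'}/\lambda_{\nummodes,k,k',a}$ and
\begin{align}
\kappa^{-1}_{\nummodes,k,k',a} &= \frac{\lambda_{\nummodes,k'}}{\lambda_{k,a}\,\lambda_{\nummodes-k,k'-a}}.
\end{align}
Since the right-hand side of the claimed bound is exactly $2^{\nummodes}\,\lambda_{k,a}\lambda_{\nummodes-k,k'-a}$, the statement is equivalent to
\begin{align}
\lambda_{\nummodes,k'} &\leq 2^{\nummodes}\,\bigl(\lambda_{k,a}\,\lambda_{\nummodes-k,k'-a}\bigr)^{2}. \label{eq:kap-target}
\end{align}

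Next I would prove \eqref{eq:kap-target} as the product of two simpler bounds. The first is the sub-multiplicativity estimate $\lambda_{\nummodes,k'}\leq\lambda_{k,a}\,\lambda_{\nummodes-k,k'-a}$; the indices match because $k+(\nummodes-k)=\nummodes$ and $a+(k'-a)=k'$, and the admissible range $\max\{0,k+k'-\nummodes\}\leq a\leq\min\{k,k'\}$ is exactly what keeps $k-a$, $k'-a$, and $b$ non-negative. The second is the first listed fact, $\binom{2n}{2j}\leq 2^{n}\binom{n}{j}$, i.e.\ $\lambda_{n,j}\geq 2^{-n}$, which gives $\lambda_{k,a}\lambda_{\nummodes-k,k'-a}\geq 2^{-\nummodes}$, that is $1\leq 2^{\nummodes}\lambda_{k,a}\lambda_{\nummodes-k,k'-a}$. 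Multiplying the two bounds yields \eqref{eq:kap-target} at once. It is worth stressing that the monotonicity fact ($\lambda_{n+x,j+y}\leq\lambda_{n,j}$) alone is too weak here: it only delivers $\lambda_{\nummodes,k'}\leq\lambda_{\nummodes-k,k'-a}$, which falls short of sub-multiplicativity by precisely the factor $\lambda_{k,a}\approx 2^{-k}$, and since the constant $2^{\nummodes}$ in the statement is tight up to an absolute factor, no lossier combination of the facts can succeed.

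The crux is thus the sub-multiplicativity lemma $\lambda_{n_1+n_2,\,j_1+j_2}\leq\lambda_{n_1,j_1}\,\lambda_{n_2,j_2}$, which I expect to be the delicate step. A useful reading is probabilistic: $\lambda_{n,j}$ is the probability that a uniformly random $2j$-subset of $2n$ elements, grouped into $n$ fixed pairs, is a union of complete pairs. I would attack the lemma either by a block decomposition/coupling across the two groups of $n_1$ and $n_2$ pairs, or, more concretely, from the closed product
\begin{align}
\lambda_{n,j} &= \prod_{l=0}^{j-1}\frac{2l+1}{2(n-l)-1},
\end{align}
which follows by telescoping the one-step ratio $\lambda_{n,j}/\lambda_{n,j-1} = (2j-1)/(2(n-j)+1)$; equivalently the lemma is the super-multiplicativity of the double-factorial ratio $F(x,y)=(2(x+y)-1)!!/\bigl((2x-1)!!\,(2y-1)!!\bigr)$. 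The care required here is that this is a genuine property of the double factorials (equivalently of the Gaussian moments $(2k-1)!!$) and not of log-convexity in the abstract: the analogous inequality fails for other nonnegative moment sequences, and a naive induction on a single variable also fails, since the per-step comparison can point the wrong way (for instance already at $n_1=n_2=2$, $j_1=j_2=1$) even though the inequality itself holds there. Accordingly I would prove it from the explicit product, or from the Gamma-function form $F(x,y)=\sqrt{\pi}\,\Gamma(x+y+\tfrac{1}{2})/\bigl(\Gamma(x+\tfrac{1}{2})\Gamma(y+\tfrac{1}{2})\bigr)$, rather than by a generic rearrangement argument.
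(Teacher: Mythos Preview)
The paper lists this as a bare ``Fact'' with no proof, so there is nothing in the paper to compare your argument against; your write-up already goes further than the paper does. The factorization $\lambda_{\nummodes,k,k',a}=\lambda_{\nummodes,k}\,\lambda_{k,a}\,\lambda_{\nummodes-k,k'-a}$ is correct, the resulting identity $\kappa^{-1}_{\nummodes,k,k',a}=\lambda_{\nummodes,k'}/(\lambda_{k,a}\lambda_{\nummodes-k,k'-a})$ follows, and splitting the target inequality into (i) the sub-multiplicativity $\lambda_{\nummodes,k'}\le\lambda_{k,a}\lambda_{\nummodes-k,k'-a}$ and (ii) the Fact-1 consequence $1\le 2^{\nummodes}\lambda_{k,a}\lambda_{\nummodes-k,k'-a}$ is exactly the right decomposition. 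Your diagnosis that the monotonicity Fact alone loses a factor $\lambda_{k,a}$ that the $2^{\nummodes}$ cannot absorb is also on point.

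The genuine gap is that the sub-multiplicativity lemma $\lambda_{n_1+n_2,\,j_1+j_2}\le\lambda_{n_1,j_1}\lambda_{n_2,j_2}$ is asserted but not proved: you name two vehicles (the telescoping product, the Gamma/Beta form) without executing either, and neither is routine. In the Beta form it becomes $\sqrt{\pi}\,B\bigl(x{+}x'{+}\tfrac12,\,y{+}y'{+}\tfrac12\bigr)\le B\bigl(x{+}\tfrac12,y{+}\tfrac12\bigr)\,B\bigl(x'{+}\tfrac12,y'{+}\tfrac12\bigr)$, which is not a standard Beta-function monotonicity. From the product formula one gets $\log F(u,v)=\sum_{i=1}^{u}\sum_{j=1}^{v}c_{i+j}$ with $c_k=\log\frac{2k-1}{2k-3}$ positive and strictly decreasing, and the lemma is the joint superadditivity $\log F(p{+}q,r{+}s)\ge\log F(p,r)+\log F(q,s)$; but the obvious attacks fail---$\log F$ is \emph{concave} in each variable separately (so marginal superadditivity is false), and the prefix-sum comparison $\sum_{k\le K}m_k(p{+}q,r{+}s)\ge\sum_{k\le K}\bigl(m_k(p,r)+m_k(q,s)\bigr)$ already breaks at $p=q=r=s=1$, $K=2$. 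So, exactly as you warn, the inequality hinges on the precise decay rate of $c_k$, and some honest argument exploiting that (not just ``from the explicit product'') must be written out before the reduction is complete.
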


 \subsection{Channel eigenvalues}

We'll start with a vastly simplified combinatorial proof that, for an arbitrary state
\begin{align}
\rho &= \sum_{\boldsymbol \tau \in \mathcal C_{2\nummodes}} g_{\boldsymbol \tau} \Gamma_{\boldsymbol \tau} ,
\end{align}
the channel is diagonal in the Majorana basis,
\begin{align}
\mathcal M_{\mathcal U} (\rho)
&= 
\sum_{\boldsymbol \tau \in \mathcal C_{2\nummodes}}
\lambda_{\nummodes, \boldsymbol \mu} 
g_{\boldsymbol \tau} \Gamma_{\boldsymbol \tau}
=
\sum_{\boldsymbol \tau \in \mathcal C_{2\nummodes, \mathrm{even}}}
\lambda_{\nummodes, \boldsymbol \mu} 
g_{\boldsymbol \tau} \Gamma_{\boldsymbol \tau},
\end{align}
with eigenvalues
\begin{align}
\lambda_{\nummodes, \boldsymbol \mu} 
&=
\lambda_{\nummodes, \abs{\boldsymbol \mu} / 2} 
=
\begin{cases}
\binom{\nummodes}{\abs{\boldsymbol \mu}/2}
/
\binom{2\nummodes}{\abs{\boldsymbol \mu}}
, & \text{$\abs{\boldsymbol \mu}$ even},\\
0, & \text{$\abs{\boldsymbol \mu}$ odd},
\end{cases}
\end{align}
which was originally shown in \cite{zhao2021fermionic} using the theory of finite frames\cite{han2000frames,waldron2018introduction}.
By linearity, it suffices to show that $\mathcal M_{\mathcal U}(\Gamma_{\boldsymbol \mu}) = \lambda_{\nummodes, \boldsymbol \mu}$ for every $\boldsymbol \mu \in \mathcal C_{2\nummodes}$.

\begin{align}
\mathcal M(\Gamma_{\boldsymbol \mu}) &= \expectationover*{
\mathbf p \in \perm(2\nummodes)
}{
\sum_{\mathbf b \in {\{0, 1\}}^{\nummodes}}
\braket{ \mathbf b | U(\mathbf p)  \Gamma_{\boldsymbol \mu} {U(\mathbf p)}^{\dagger}| \mathbf b}
{U(\mathbf p)}^{\dagger}    \ket{\mathbf b}\bra{\mathbf b} U(\mathbf p)
}
\\
&=
\frac{1}{(2\nummodes)!}
\sum_{\substack{
\mathbf p \in \perm(2\nummodes) \\
\mathbf b \in {\{0, 1\}}^{\nummodes}
}}
\braket{ \mathbf b | U(\mathbf p)  
\Gamma_{\boldsymbol \mu} 
{U(\mathbf p)}^{\dagger}| \mathbf b}
{U(\mathbf p)}^{\dagger}    
\left(
2^{-\nummodes}
\sum_{\mathbf x \in {\{0, 1\}}^{\nummodes}}
{(-1)}^{\mathbf b \cdot \mathbf x}
\Gamma_{\seq(\mathbf x)}
\right)
U(\mathbf p)
\label{eq:eigenvalue-expansion}
\\
&=
\frac{2^{-\nummodes}}{(2\nummodes)!}
\sum_{\substack{
\mathbf p \in \perm(2\nummodes)\\
\mathbf b, \mathbf x\in {\{0, 1\}}^{\nummodes}
}}
{(-1)}^{\mathbf b \cdot \mathbf x}
\braket{ \mathbf b | \Gamma_{\mathbf p(\boldsymbol \mu)} | \mathbf b}
\Gamma_{\mathbf p^{-1}(\seq(\mathbf x))}
\label{eq:eigenvalue-homomorphism}
\\
&=
\frac{2^{-\nummodes}}{(2\nummodes)!}
\sum_{\substack{
\mathbf p \in \perm(2\nummodes) : \mathbf p(\boldsymbol \mu) \in \mathcal D^*_{2\nummodes}\\
\mathbf b, \mathbf x\in {\{0, 1\}}^{\nummodes}
}}
{(-1)}^{\mathbf b \cdot (\mathbf x + \bin(\mathbf p(\boldsymbol \mu)))}
\sign(\mathbf p(\boldsymbol \mu))
\Gamma_{\mathbf p^{-1}(\seq(\mathbf x))}
\label{eq:eigenvalue-diagonal}
\\
&=
\frac{1}{(2\nummodes)!}
\sum_{
    \mathbf p \in \perm(2\nummodes) : \mathbf p(\boldsymbol \mu) \in \mathcal D^*_{2\nummodes}
}
\Gamma_{\boldsymbol \mu}
\label{eq:eigenvalue-delta}
\\
&=
\left(
\Pr_{\mathbf p \in \perm(2\nummodes)}
\left[
\mathbf p(\boldsymbol \mu) \in \mathcal D^*_{2\nummodes}
\right]
\right)
\Gamma_{\boldsymbol \mu}
=
\lambda_{\nummodes, \boldsymbol \mu}
\Gamma_{\boldsymbol \mu},
\label{eq:eigenvalue-prob}
\end{align}
where
\begin{align}
\lambda_{\nummodes, \boldsymbol \tau}
&
=
\Pr_{\mathbf p \in \perm(2\nummodes)}
\left[
    \mathbf p(\boldsymbol \tau) \in \mathcal D^*_{2\nummodes}
\right]
=
\frac{1}{(2\nummodes)!}
\binom{\nummodes}{\abs{\boldsymbol \mu} / 2}
\abs{\boldsymbol \mu}!
(2\nummodes - \abs{\boldsymbol \mu}))!
=
\binom{\nummodes}{\abs{\boldsymbol \mu} / 2}
/
\binom{2\nummodes}{\abs{\boldsymbol \mu}}.
\end{align}
In \cref{eq:eigenvalue-expansion}, we expand the projector $\ket{\mathbf b} \bra{\mathbf b}$ in the diagonal Majorana basis.
In \cref{eq:eigenvalue-homomorphism}, we make (twice) use of the group homomorphism.
In \cref{eq:eigenvalue-diagonal}, we make use of the fact that $\braket{\mathbf b | \Gamma_{\boldsymbol \tau} | \mathbf b}$ vanishes if $\boldsymbol \tau \notin \mathcal D_{2\nummodes}$ and is equal to ${(-1)}^{\mathbf b \cdot \bin(\boldsymbol \tau)}$ if $\boldsymbol \tau \in \mathcal D_{2\nummodes}$.
In \cref{eq:eigenvalue-delta}, we use the fact that $\sum_{\mathbf b} {(-1)}^{\mathbf b\cdot \mathbf y} = 2^{\nummodes} \delta_{\mathbf y}$, and specifically that
$\mathbf p^{-1} (\seq(\mathbf x)) = 
\mathbf p^{-1} (\seq(\bin(\mathbf p(\boldsymbol \mu)))) = 
\boldsymbol \mu$.
\cref{eq:eigenvalue-prob} makes clear why $\lambda_{\nummodes, \boldsymbol \mu}$ has the form that it does: it is simply the probability that a uniformly random permutation maps an index sequence of a certain length to a ``diagonal'' one.
 \subsection{Second moment of the estimator for general state and observable}\label{sec:general-statistics}

In this section, we prove \cref{thm:moments}, restated below.
\genvarthm*

As with $\lambda_{\nummodes, k}$, $\lambda_{\nummodes, \boldsymbol \mu, \boldsymbol \mu'} = \lambda_{\nummodes, k, k', a}$ has a combinatorial interpretation.
Specifically, it is the probability that under a uniformly random permutation $\mathbf p$ four disjoint index sequences 
$\boldsymbol \mu \setminus \boldsymbol \mu'$,
$\boldsymbol \mu' \setminus \boldsymbol \mu$,
$\boldsymbol \mu \cap \boldsymbol \mu'$, and
$(1, \ldots, 2\nummodes) \setminus (\boldsymbol \mu \cup \boldsymbol \mu')$
of respective sizes $2k$, $2k'$, $2a$, and $2(\nummodes - k- k' + a)$ are each simultaneously mapped to ``diagonal'' index sequences
$\mathbf p\left(\boldsymbol \mu \setminus \boldsymbol \mu'\right)$,
$\mathbf p\left(\boldsymbol \mu' \setminus \boldsymbol \mu\right)$,
$\mathbf p\left(\boldsymbol \mu \cap \boldsymbol \mu'\right)$, and
$\mathbf p\left((1, \ldots, 2\nummodes) \setminus (\boldsymbol \mu \cup \boldsymbol \mu')\right)$:
\begin{align}
\lambda_{\nummodes, \boldsymbol \mu, \boldsymbol \mu'}
&=
\lambda_{\nummodes, k, k', a}
\\
&=
\frac{1}{(2\nummodes)!}
\binom{\nummodes}{
k - a, k' - a, a, \nummodes - k - k' + a
}
(2(k-a))!
(2(k'-a))!
(2a)!
(2(\nummodes - k - k' + a))!
\\
&=
\frac{
\binom{
\nummodes
}{
k - a, k' - a, a, \nummodes - k - k' + a
}
}{
\binom{
2\nummodes
}{
2(k - a), 2(k' - a), 2a, 2(\nummodes - k - k' + a)
}
}.
\end{align}

\begin{proof}[Proof of~\cref{thm:moments}]
To begin, note that
\begin{align}
\hat{h}^2
&=
\trace{
H \mathcal M^{-1}_{\mathcal U} (\hat{\rho})
}^2
\\
&=
\trace{
\mathcal M^{-1}_{\mathcal U} (H) \hat{\rho}
}^2
\\
&=
{\left(
\sum_{\boldsymbol \mu \in \mathcal C_{2\nummodes, \mathrm{even}}}
\lambda_{\nummodes, \boldsymbol \mu}^{-1}
h_{\boldsymbol \mu}
\trace{
\Gamma_{\boldsymbol \mu} \hat{\rho}
}
\right)}^2
\\
&=
\sum_{\boldsymbol \mu, \boldsymbol \mu'\in \mathcal C_{2\nummodes, \mathrm{even}}}
\lambda_{\nummodes, \boldsymbol \mu}^{-1}
\lambda_{\nummodes, \boldsymbol \mu'}^{-1}
h_{\boldsymbol \mu}
h_{\boldsymbol \mu'}
\trace{
\Gamma_{\boldsymbol \mu} \hat{\rho}
}
\trace{
\Gamma_{\boldsymbol \mu'} \hat{\rho}
}.
\end{align}
Therefore, by linearity, the theorem follows from
\begin{align}
&
\expectationover{\mathbf p, \mathbf b}{
\trace{
\Gamma_{\boldsymbol \mu} \hat{\rho}
}
\trace{
\Gamma_{\boldsymbol \mu'} \hat{\rho}
}
}
\\
&=
\frac{1}{(2\nummodes)!}
\sum_{\substack{
\boldsymbol \tau \in \mathcal C_{2\nummodes} \\
\mathbf p \in \perm(2\nummodes) \\
\mathbf b \in {\{0, 1\}}^{\nummodes}
}}
g_{\boldsymbol \tau}
\trace{
\Gamma_{\boldsymbol \tau}
{U(\mathbf p)}^{\dagger}
\ket{\mathbf b} \bra{\mathbf b}
{U(\mathbf p)}
}
\trace{
\Gamma_{\boldsymbol \mu} 
{U(\mathbf p)}^{\dagger}
\ket{\mathbf b} \bra{\mathbf b}
{U(\mathbf p)}
}
\trace{
\Gamma_{\boldsymbol \mu'}
{U(\mathbf p)}^{\dagger}
\ket{\mathbf b} \bra{\mathbf b}
{U(\mathbf p)}
}
\label{eq:correlation-expansion}
\\
&=
\frac{1}{(2\nummodes)!}
\sum_{\substack{
\boldsymbol \tau \in \mathcal C_{2\nummodes} \\
\mathbf p \in \perm(2\nummodes) \\
\mathbf b \in {\{0, 1\}}^{\nummodes}
}}
g_{\boldsymbol \tau}
\braket{\mathbf b |
\Gamma_{\mathbf p(\boldsymbol \tau)}
| \mathbf b}
\braket{\mathbf b|
\Gamma_{\mathbf p(\boldsymbol \mu)}
| \mathbf b}
\braket{\mathbf b|
\Gamma_{\mathbf p(\boldsymbol \mu')}
| \mathbf b}
\\
&=
\frac{1}{(2\nummodes)!}
\sum_{\substack{
\boldsymbol \tau \in \mathcal C_{2\nummodes} \\
\mathbf p : 
\mathbf p(\boldsymbol \tau),
\mathbf p(\boldsymbol \mu),
\mathbf p(\boldsymbol \mu')
\in \mathcal D^*_{2\nummodes}
\\
\mathbf b \in {\{0, 1\}}^{\nummodes}
}}
g_{\boldsymbol \tau}
{(-1)}^{
    \mathbf b \cdot \left(
    \bin(\mathbf p(\boldsymbol \tau)) + 
    \bin(\mathbf p(\boldsymbol \mu)) + 
    \bin(\mathbf p(\boldsymbol \mu'))
    \right)
}
\sign\left(\mathbf p(\boldsymbol \tau)\right)
\sign\left(\mathbf p(\boldsymbol \mu)\right)
\sign\left(\mathbf p(\boldsymbol \mu')\right)
\\
&=
\frac{2^{\nummodes}}{(2\nummodes)!}
\sum_{\substack{
\boldsymbol \tau : \bin(\mathbf p(\boldsymbol \tau)) = \bin(\mathbf p(\boldsymbol \mu)) \oplus \bin(\mathbf p(\boldsymbol \mu')) \\
\mathbf p : 
\mathbf p(\boldsymbol \tau),
\mathbf p(\boldsymbol \mu),
\mathbf p(\boldsymbol \mu')
\in \mathcal D^*_{2\nummodes}
}}
g_{\boldsymbol \tau}
\sign\left(\mathbf p(\boldsymbol \tau)\right)
\sign\left(\mathbf p(\boldsymbol \mu)\right)
\sign\left(\mathbf p(\boldsymbol \mu')\right)
\\
&=
2^{\nummodes}
g_{\boldsymbol \mu \boldsymbol \mu'}
\sum_{\substack{
\mathbf p : 
\mathbf p(\boldsymbol \mu),
\mathbf p(\boldsymbol \mu')
\in \mathcal D^*_{2\nummodes}
}}
\frac{1}{(2\nummodes)!}
\label{eq:correlation-sign}
\\
&=
2^{\nummodes}
g_{\boldsymbol \mu \boldsymbol \mu'}
\Pr_{\mathbf p}
\left[
\mathbf p(\boldsymbol \mu), 
\mathbf p(\boldsymbol \mu) \in \mathcal D^*_{2\nummodes}
\right]
\\
&=
2^{\nummodes} g_{\boldsymbol \mu \boldsymbol \mu'}
\lambda_{\nummodes, \boldsymbol \mu', \boldsymbol \mu'}.
\end{align}
In \cref{eq:correlation-sign}, we used the fact that for fixed $\boldsymbol \mu, \boldsymbol \mu', \mathbf p$, there is exactly one $\boldsymbol \tau \in \mathcal D^*_{2\nummodes}$ such that 
$\bin(\mathbf p(\boldsymbol \tau)) = 
\bin(\mathbf p(\boldsymbol \mu)) + 
\bin(\mathbf p(\boldsymbol \mu')) 
$, and that
\begin{flalign}
\Gamma_{\boldsymbol \mu \boldsymbol \mu'}
&=
\Gamma_{\boldsymbol \mu}
\Gamma_{\boldsymbol \mu'}
\\
&=
U^{\dagger} U 
\Gamma_{\boldsymbol \mu}
U^{\dagger} U 
\Gamma_{\boldsymbol \mu'}
U^{\dagger} U 
=
U^{\dagger} 
\Gamma_{\mathbf p(\boldsymbol \mu)}
\Gamma_{\mathbf p(\boldsymbol \mu')}
U 
\\
&=
\sign(\mathbf p(\boldsymbol \mu))
\sign(\mathbf p(\boldsymbol \mu'))
U^{\dagger}
\Gamma_{\seq(\bin(\mathbf p(\boldsymbol \mu)))}
\Gamma_{\seq(\bin(\mathbf p(\boldsymbol \mu')))}
U
\\
&=
\sign(\mathbf p(\boldsymbol \mu))
\sign(\mathbf p(\boldsymbol \mu'))
U^{\dagger}
\Gamma_{\seq(\bin(\mathbf p(\boldsymbol \mu)) \oplus \bin(\mathbf p(\boldsymbol \mu')))}
U
=
\sign(\mathbf p(\boldsymbol \mu))
\sign(\mathbf p(\boldsymbol \mu'))
U^{\dagger}
\Gamma_{\seq(\bin(\mathbf p(\boldsymbol \tau)))}
U
\\
&=
\sign(\mathbf p(\boldsymbol \tau))
\sign(\mathbf p(\boldsymbol \mu))
\sign(\mathbf p(\boldsymbol \mu'))
U^{\dagger}
\Gamma_{\mathbf p(\boldsymbol \tau)}
U
=
\sign(\mathbf p(\boldsymbol \tau))
\sign(\mathbf p(\boldsymbol \mu))
\sign(\mathbf p(\boldsymbol \mu'))
\Gamma_{\boldsymbol \tau}.
\end{flalign}

\end{proof}
 \subsection{Computation of the estimator}\label{sec:computation}
In this section we show how to efficiently compute the estimators of projectors and $X$-type operators, as captured by \cref{lem:computation}, restated below.

\computationlem*

\begin{proof}
First, note that for any operator $A$ and any fermionic Gaussian unitary $V$, conjugation by $V$ commutes with the inverse channel:
\begin{align}
\mathcal M^{-1}\left(U A U^{\dagger}\right)
&=
U
\mathcal M^{-1}\left( A \right)
U^{\dagger}.
\end{align}
Let $V$ be the fermionic Gaussian unitary that prepares $\ket{\psi} = V \ket{\mathbf 0}$.
Then 
\begin{align}
\trace{\mathcal M^{-1}_{\mathcal U} \left(\ket{\psi} \bra{\psi}\right) \ket{\phi} \bra{\phi}}
&=
\trace{\mathcal M^{-1}_{\mathcal U} \left(V \ket{\mathbf 0} \bra{\mathbf 0} V^{\dagger} \right) \ket{\phi} \bra{\phi}}
\\
&=
\trace{\mathcal M^{-1}_{\mathcal U} \left(\ket{\mathbf 0} \bra{\mathbf 0} \right) V^{\dagger} \ket{\phi} \bra{\phi} V}.
\end{align}
The difficulty is that $\mathcal M^{-1}_{\mathcal U} \left(\ket{\mathbf 0} \bra{\mathbf 0} \right)$ is not a matchgate.
However, as captured by the following lemma, it can be written as a sum of $\nummodes + 1$ matchgate tensor networks.
The proof is deferred to the end of the 
section.

\begin{lemma}\label{lem:channel-transform}
\begin{align}
\mathcal M^{-1}
\left(
\ket{\mathbf 0} \bra{\mathbf 0}
\right)
&=
\sum_{j=0}^{\nummodes}
c_j
{
\left[
\sum_{b \in \{0, 1\}}
\omega^{-j b}
\ket{b} \bra{b}
\right]
}^{\otimes \nummodes},
\end{align}
where
\begin{align}
c_j &= 
\frac{1}{\nummodes + 1}
\sum_{i=0}^{\nummodes}
\omega^{i j} f_{\nummodes, i}, \\
\omega &= \exp\left(2 \pi i / (\nummodes + 1)\right),\\
f_{\nummodes, i}
    &= 
\frac{
2^{-\nummodes}
}{
\binom{\nummodes}{i}
}
\sum_{a=0}^{i}
{(-1)}^a
\sum_{k=a}^{\nummodes - i + a}
\binom{2\nummodes}{2k}
\binom{k}{a}
\binom{\nummodes-k}{i-a}.
\end{align}
\end{lemma}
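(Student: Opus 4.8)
The plan is to reduce the claimed operator identity to a scalar identity between diagonal matrix entries, and then to recognize those entries as (dual) Krawtchouk polynomials so that the whole statement collapses to the self-duality of the Krawtchouk polynomials. The key structural input is that, by the eigenvalue calculation, $\mathcal M^{-1}$ acts as the scalar $\lambda_{\nummodes,k}^{-1} = \binom{2\nummodes}{2k} / \binom{\nummodes}{k}$ on the degree-$2k$ Majorana subspace. First I would expand the target in the diagonal Majorana basis: setting $\mathbf b = \mathbf 0$ in \cref{eq:eigenvalue-expansion} gives $\ket{\mathbf 0}\bra{\mathbf 0} = 2^{-\nummodes} \sum_{\mathbf x \in \{0,1\}^{\nummodes}} \Gamma_{\seq(\mathbf x)}$, and each $\Gamma_{\seq(\mathbf x)}$ has degree $2 \norm{\mathbf x}_1$. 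Applying $\mathcal M^{-1}$ term by term yields $\mathcal M^{-1}(\ket{\mathbf 0}\bra{\mathbf 0}) = 2^{-\nummodes} \sum_{\mathbf x} \lambda_{\nummodes, \norm{\mathbf x}_1}^{-1} \Gamma_{\seq(\mathbf x)}$, which is diagonal in the computational basis with an entry at $\ket{\mathbf y}$ that depends only on the weight $w = \norm{\mathbf y}_1$. Since the right-hand side is manifestly diagonal and weight-symmetric as well, it suffices to match the two weight-$w$ entries for every $w \in \{0, \ldots, \nummodes\}$.

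Next I would evaluate the left-hand entry explicitly. Using the fact invoked for \cref{eq:eigenvalue-diagonal} that $\braket{\mathbf y | \Gamma_{\seq(\mathbf x)} | \mathbf y} = (-1)^{\mathbf y \cdot \mathbf x}$, and grouping the sum over $\mathbf x$ by its weight $k = \norm{\mathbf x}_1$ (the inner sign sum depending only on the number $a$ of positions where $\mathbf x$ and $\mathbf y$ are both one), the diagonal entry becomes $g(w) = 2^{-\nummodes} \sum_{k=0}^{\nummodes} \lambda_{\nummodes, k}^{-1} K_k(w)$, where $K_k(w) = \sum_a (-1)^a \binom{w}{a}\binom{\nummodes - w}{k - a}$ is the binary Krawtchouk polynomial. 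This is the entire content of the left-hand side.

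For the right-hand side, each $\bigl[\sum_b \omega^{-jb}\ket{b}\bra{b}\bigr]^{\otimes \nummodes}$ is diagonal with weight-$w$ entry $\omega^{-jw}$. Because $\omega$ is a primitive $(\nummodes+1)$-th root of unity, the matrix $[\omega^{-jw}]_{j,w=0}^{\nummodes}$ is an unnormalized DFT matrix, and the definition of $c_j$ is exactly its inverse transform applied to the sequence $(f_{\nummodes, i})$; hence, via $\sum_{j=0}^{\nummodes} \omega^{j(i - w)} = (\nummodes+1)\delta_{i, w}$, the weight-$w$ entry of the right-hand side collapses to precisely $f_{\nummodes, w}$. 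It then remains to prove $g(w) = f_{\nummodes, w}$. Reorganizing the defining double sum by summing over $k$ first gives $f_{\nummodes, w} = \frac{2^{-\nummodes}}{\binom{\nummodes}{w}} \sum_k \binom{2\nummodes}{2k}\, K_w(k)$, where $K_w(k) = \sum_a (-1)^a \binom{k}{a}\binom{\nummodes - k}{w - a}$ is the dual Krawtchouk polynomial. Comparing with $g(w) = 2^{-\nummodes}\sum_k \frac{\binom{2\nummodes}{2k}}{\binom{\nummodes}{k}} K_k(w)$, the equality follows term by term in $k$ from the Krawtchouk self-duality $\binom{\nummodes}{w} K_k(w) = \binom{\nummodes}{k} K_w(k)$.

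The only substantive step is this last duality; everything else is bookkeeping (the Majorana expansion, the scalar action of $\mathcal M^{-1}$, and the Fourier inversion that defines the $c_j$). I expect it to be the main obstacle only in the sense of needing a clean, self-contained argument, which I would supply via generating functions: from $\sum_k K_k(w) z^k = (1 - z)^w (1 + z)^{\nummodes - w}$ one gets $\sum_{w, k} \binom{\nummodes}{w} K_k(w)\, y^w z^k = (1 + y + z - yz)^{\nummodes}$, so $\binom{\nummodes}{w} K_k(w)$ is the coefficient of $y^w z^k$ in a polynomial symmetric under $y \leftrightarrow z$, whence it equals the coefficient of $y^k z^w$, namely $\binom{\nummodes}{k} K_w(k)$. (A direct signed double-counting of the subset configurations would work equally well.) This establishes $g(w) = f_{\nummodes, w}$ for all $w$ and hence the lemma.
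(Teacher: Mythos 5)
Your proof is correct and follows essentially the same route as the paper's: expand $\ket{\mathbf 0}\bra{\mathbf 0}$ in the diagonal Majorana basis, apply the inverse channel degree by degree, observe that the result is diagonal with entries depending only on Hamming weight, and match the $c_j$ via discrete Fourier inversion. The only difference is that you make explicit (via Krawtchouk self-duality) the termwise reciprocity $\binom{\nummodes}{i}\binom{i}{a}\binom{\nummodes-i}{k-a} = \binom{\nummodes}{k}\binom{k}{a}\binom{\nummodes-k}{i-a}$, which the paper uses silently when it substitutes $\lambda_{\nummodes,k}^{-1} = \binom{2\nummodes}{2k}/\binom{\nummodes}{k}$ and passes to its final expression for $f_{\nummodes,i}$.
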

Let 
\begin{align}
\mathcal M^{-1}_{\mathcal U} \left(\ket{\mathbf 0} \bra{\mathbf 0} \right) &= \sum_{j=0}^{\nummodes} M_j
\end{align}
as in \cref{lem:channel-transform}.
Then
\begin{align}
\trace{\mathcal M^{-1}_{\mathcal U} \left(\ket{\psi} \bra{\psi}\right) \ket{\phi} \bra{\phi}}
&= 
\sum_{j=0}^{\nummodes} \trace{M_j V^{\dagger} \ket{\phi} \bra{\phi} V}
\end{align}
can be computed in $\poly(\nummodes)$ time because $M_j$, $\ket{\phi}$, and $V$ are all matchgate.

Now, suppose that $\ket{\psi}$ is an $\numelec$-electron Slater determinant for even $n$.
Then there is a number-preserving fermionic Gaussian unitary $V$ such that $\ket{\psi} = V \ket{1}^{\otimes \numelec} \ket{0}^{\otimes (\nummodes - \numelec)}$.
Thus
\begin{align}
\trace{\mathcal M^{-1}_{\mathcal U} \left(\ket{\mathbf 0} \bra{\psi}\right) \ket{\phi} \bra{\phi}}
&=
\trace{\mathcal M^{-1}_{\mathcal U} \left(V \ket{\mathbf 0, \mathbf 0} \bra{\mathbf 1, \mathbf 0} V^{\dagger} \right) \ket{\phi} \bra{\phi}}
\\
&=
\trace{\mathcal M^{-1}_{\mathcal U} \left(\ket{\mathbf 0, \mathbf 0} \bra{\mathbf 1, \mathbf 0} \right) V^{\dagger} \ket{\phi} \bra{\phi} V}.
\end{align}
Analogous to \cref{lem:channel-transform}, we can write
\begin{align}
\mathcal M^{-1}_{\mathcal U}\left(
\ket{\mathbf 0, \mathbf 0} \bra{\mathbf 1, \mathbf 0}
\right)
&= \sum_{j=0}^{\nummodes - \numelec} M_j
\end{align}
as a sum of matchgates, so that 
\begin{align}
\trace{\mathcal M^{-1}_{\mathcal U} \left(\ket{\mathbf 0} \bra{\psi}\right) \ket{\phi} \bra{\phi}}
&=
\sum_{j=0}^{\nummodes - \numelec}
\trace{M_j  V^{\dagger} \ket{\phi} \bra{\phi} V}
\end{align}
can be computed in $\poly(\nummodes)$ time.
\end{proof}

\begin{proof}[Proof of \cref{lem:channel-transform}]
We begin by expanding $\ket{\mathbf 0} \bra{\mathbf 0}$ in the Majorana basis and applying the inverse channel:
\begin{align}
\mathcal M^{-1} \left(\ket{\mathbf 0} \bra{\mathbf 0}\right)
&=
\mathcal M^{-1} \left(
2^{-\nummodes}
\sum_{\mathbf z \in {\{0, 1\}}^{\nummodes}}
\Gamma_{\seq(\mathbf z)}
\right)
\\
&=
2^{-\nummodes}
\sum_{k=0}^{\nummodes}
\lambda_{\nummodes, k}^{-1}
\sum_{\mathbf z \in {\{0, 1\}}^{\nummodes}: \norm{\mathbf z}_1 = k}
\Gamma_{\seq(\mathbf z)}
\\
&=
2^{-\nummodes}
\sum_{k=0}^{\nummodes}
\lambda_{\nummodes, k}^{-1}
\sum_{\mathbf z \in {\{0, 1\}}^{\nummodes}: \norm{\mathbf z}_1 = k}
{(-1)}^{\mathbf b \cdot \mathbf z}
\sum_{\mathbf b \in {\{0, 1\}}^{\nummodes}}
\ket{\mathbf b} \bra{\mathbf b}
\\
&=
\sum_{\mathbf b \in {\{0, 1\}}^{\nummodes}}
\left[
2^{-\nummodes}
\sum_{k=0}^{\nummodes}
\lambda_{\nummodes, k}^{-1}
\sum_{a=\max\{0, k + \norm{\mathbf b}_1 - \nummodes\}}^{\min\{k, \norm{\mathbf b}_1\}}
{(-1)}^a
\binom{\norm{\mathbf b}_1}{a}
\binom{\nummodes - \norm{\mathbf b}_1}{k - a}
\right]
\ket{\mathbf b} \bra{\mathbf b}
\\
&=
\sum_{\mathbf b \in {\{0, 1\}}^{\nummodes}}
f_{\nummodes, \norm{\mathbf b}_1}
\ket{\mathbf b} \bra{\mathbf b}
,
\end{align}
where
\begin{align}
f_{\nummodes, i}
&=
2^{-\nummodes}
\sum_{k=0}^{\nummodes}
\lambda_{\nummodes, k}^{-1}
\sum_{a=\max\{0, k + i - \nummodes\}}^{\min\{k, i\}}
{(-1)}^a
\binom{i}{a}
\binom{\nummodes - i}{k - a}
\\
&=
2^{-\nummodes}
\sum_{a=0}^{i}
\sum_{k=a}^{\nummodes - i + a}
\lambda_{\nummodes, k}^{-1}
{(-1)}^a
\binom{i}{a}
\binom{\nummodes - i}{k - a}
\\
&=
\frac{
2^{-\nummodes}
}{
\binom{\nummodes}{i}
}
\sum_{a=0}^{i}
{(-1)}^a
\sum_{k=a}^{\nummodes - i + a}
\binom{2\nummodes}{2k}
\binom{k}{a}
\binom{\nummodes-k}{i-a}.
\end{align}
Then for all $i \in \{0, \ldots, \nummodes\}$,
\begin{align}
\sum_{j=0}^{\nummodes}
\omega^{-ij}
c_j 
&=
\frac{1}{\nummodes + 1}
\sum_{j, i'=0}^{\nummodes}
\omega^{-ij}
\omega^{j i'}
f_{i'}
\\
&=
\frac{1}{\nummodes + 1}
\sum_{j, i'=0}^{\nummodes}
f_{i'}
\omega^{j(i'-i)}
\\
&=
f_i.
\end{align}
Finally,
\begin{align}
\mathcal M^{-1}
\left(
\ket{\mathbf 0} \bra{\mathbf 0}
\right)
&=
\sum_{\mathbf b \in {\{0, 1\}}^{\nummodes}}
f_{\nummodes, \norm{\mathbf b}_1}
\ket{\mathbf b} \bra{\mathbf b}
\\
&=
\sum_{j=0}^{\nummodes}
c_j
\sum_{\mathbf b \in {\{0, 1\}}^{\nummodes}}
{\left(
\omega^{-j}
\right)}^{\norm{\mathbf b}_1}
\ket{\mathbf b} \bra{\mathbf b}
\\
&=
\sum_{j=0}^{\nummodes}
c_j
{
\left[
\sum_{b \in \{0, 1\}}
\omega^{-j b}
\ket{b} \bra{b}
\right]
}^{\otimes \nummodes}.
\end{align}
\end{proof}
 \subsection{Shadow norm of projector onto zero state}\label{sec:shadow-norm-bounds}
In this section we prove \cref{lem:projector-norm}, restated below.

\projectornormlem*

\begin{proof}[Proof of \cref{lem:projector-norm}]
We start by noting that 
\begin{align}
\ket{\mathbf 0}\bra{\mathbf 0}
&=
2^{-\nummodes}
\sum_{k = 0}^{\nummodes}
\sum_{\boldsymbol \mu \in \mathcal D_{2\nummodes, 2k}}
\Gamma_{\boldsymbol \mu}.
\label{eq:zero-projector-expansion}
\end{align}
Thus
\begin{flalign}
&
\norm{
\ket{\mathbf 0}\bra{\mathbf 0}
}_{\mathcal U}
\\
&
\leq
2^{-2\nummodes}
\sum_{
\boldsymbol \mu, \boldsymbol \mu' \in \mathcal D_{2\nummodes}
}
\kappa_{\nummodes, \boldsymbol \mu, \boldsymbol \mu'}
&\text{\cref{eq:zero-projector-expansion}, \cref{cor:shadow-norm-bound}}
\\
&=
2^{-2\nummodes}
\sum_{k= 0}^{\nummodes}
\binom{\nummodes}{k}
\sum_{a=0}^k
\binom{k}{a}
\sum_{b=0}^{\nummodes - k}
\binom{\nummodes - k}{b}
\kappa_{\nummodes, k, a + b, a}
\\
&\leq
2^{-\nummodes}
\sum_{k= 0}^{\nummodes}
\binom{\nummodes}{k}
\sum_{a=0}^k
\binom{k}{a}
\sum_{b=0}^{\nummodes - k}
\binom{\nummodes - k}{b}
\frac{
\binom{k}{a}
\binom{\nummodes - k}{b}
}{
\binom{2k}{2a}
\binom{2(\nummodes-k)}{2b}
}
&\text{\cref{fact:kappa-bound}}
\label{eq:projector-variance-expand-multinom}
\\
&=
2^{-\nummodes}
\sum_{k= 0}^{\nummodes}
\binom{\nummodes}{k}
\sum_{a=0}^k
\left(
\frac{
\binom{k}{a}^2
}{
\binom{2k}{2a}
}
\right)
\sum_{b=0}^{\nummodes - k}
\left(
\frac{
\binom{\nummodes - k}{b}^2
}{
\binom{2(\nummodes - k)}{2b}
}
\right)
\label{eq:projector-variance-superbinom-bound}
\\
&=
2^{-\nummodes}
\sum_{k= 0}^{\nummodes}
\binom{\nummodes}{k}
\sum_{a=0}^k
\left(
\frac{
\binom{2a}{a}
\binom{2(k-a)}{k-a}
}{
\binom{2k}{k}
}
\right)
\sum_{b=0}^{\nummodes - k}
\left(
\frac{
\binom{2b}{b}
\binom{2(\nummodes-k-b)}{\nummodes - k - b}
}{
\binom{2(\nummodes - k)}{\nummodes - k}
}
\right)
\\
&=
2^{-\nummodes}
\sum_{k= 0}^{\nummodes}
\frac{
\binom{\nummodes}{k}
}{
\binom{2k}{k}
\binom{2(\nummodes - k)}{\nummodes - k}
}
\left[
\sum_{a=0}^k
\binom{2a}{a}
\binom{2(k-a)}{k-a}
\right]
\left[
\sum_{b=0}^{\nummodes - k}
\binom{2b}{b}
\binom{2(\nummodes-k-b)}{\nummodes - k - b}
\right]
\\
&=
2^{-\nummodes}
\sum_{k= 0}^{\nummodes}
\frac{
\binom{\nummodes}{k}
}{
\binom{2k}{k}
\binom{2(\nummodes - k)}{\nummodes - k}
}
2^{2k}
2^{2(\nummodes-k)}
=
2^{\nummodes}
\sum_{k= 0}^{\nummodes}
\frac{
\binom{\nummodes}{k}
}{
\binom{2k}{k}
\binom{2(\nummodes - k)}{\nummodes - k}
}
\\
&\leq
2^{\nummodes}
\sum_{k= 0}^{\nummodes}
\binom{\nummodes}{k}
\sqrt{k}
2^{1-2k}
\sqrt{\nummodes - k}
2^{1-2(\nummodes-k)}
\\
&=
2^{2-\nummodes}
\sum_{k= 0}^{\nummodes}
\binom{\nummodes}{k}\sqrt{k(\nummodes - k)}
\\
&=
\sum_{k= 0}^{\nummodes}
2^{1-\nummodes}\nummodes
\binom{\nummodes}{k}
\left(
2
\sqrt{\frac{k}{m} \left(1 - \frac{k}{m}\right)}
\right)
\\
&\leq
\sum_{k= 0}^{\nummodes}
2^{1-\nummodes}\nummodes
\binom{\nummodes}{k}
=
2 \nummodes.
\end{flalign}
\end{proof}

 \subsection{Numerical evaluation of the shadow norm}\label{sec:numerics}
\begin{figure}[t!]
\centering
\subcaptionbox{$F_0(m)$ and $f(m, 0)$}{
\includegraphics[width=0.45\textwidth]{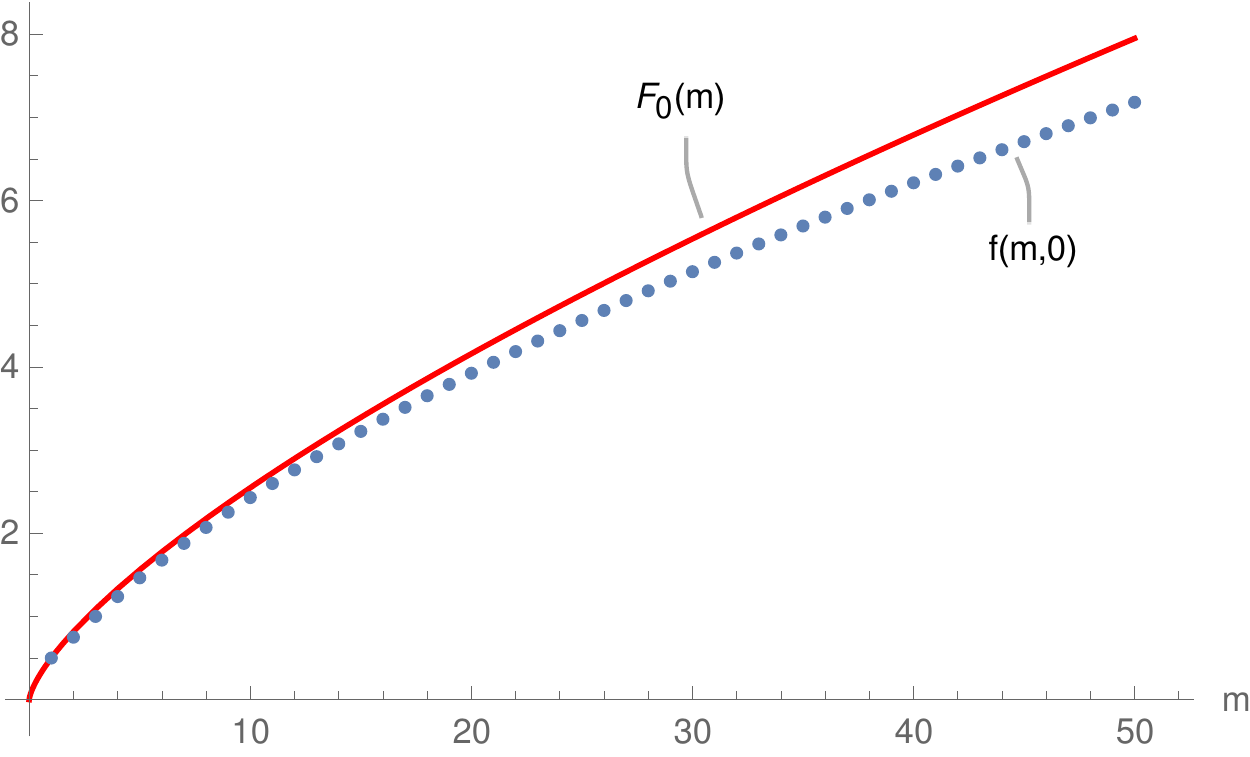}
}
\label{fig:F0}
\subcaptionbox{$F_1(m)$ and $f(m, m)$}{
\includegraphics[width=0.45\textwidth]{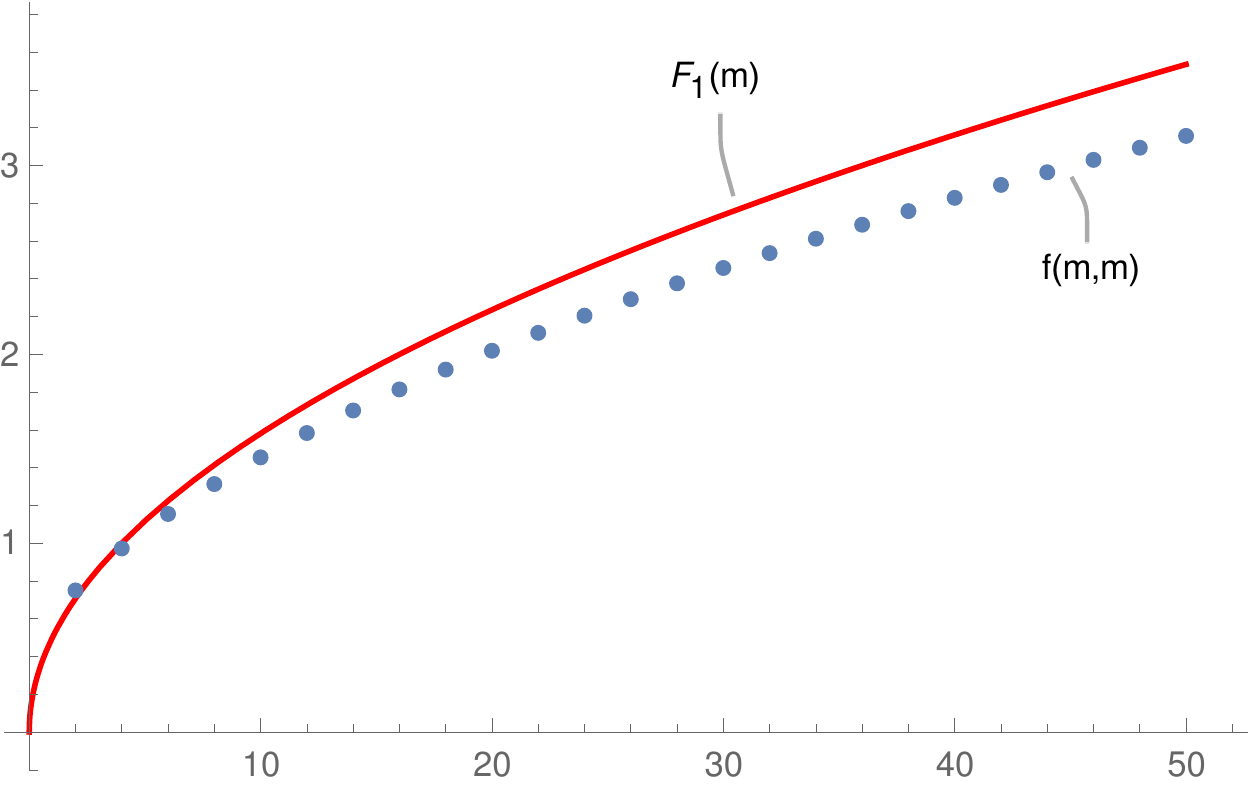}
}
\label{fig:F1}
\caption{Numerical evaluation of the function $f(\nummodes, \numelec)$ at $\numelec  = 0$ and $\numelec=\nummodes$, together with conjectured upper bounds $F_0(\nummodes)$ and $F_1(\nummodes)$, respectively.}\label{fig:projector-norm}
\end{figure}
 
In this section we provide numerical evidence that
\begin{align}
\norm{
{\left(
\ket{0} \bra{1}
\right)}^{\numelec}
{\left(
\ket{1} \bra{1}
\right)}^{\nummodes - \numelec}
}_{\mathcal U}
= O (\sqrt{\nummodes}).
\end{align}
We start by noting that 
\begin{align}
{\left(
\ket{0} \bra{1}
\right)}^{\numelec}
{\left(
\ket{1} \bra{1}
\right)}^{\nummodes - \numelec}
&= 
2^{-\nummodes}
{(-i)}^{\binom{\numelec}{2}}
\sum_{\substack{
\mathbf x \in {\{0, 1\}}^{\numelec} \\
\mathbf z \in {\{0, 1\}}^{\nummodes - \numelec} 
}}
i^{\norm{\mathbf x}_1}
\Gamma_{\seqx(\mathbf x)}
\otimes \Gamma_{\seq(\mathbf z)}
\\
&=
\sum_{\substack{
\mathbf x \in {\{0, 1\}}^{\numelec} \\
\mathbf z \in {\{0, 1\}}^{\nummodes - \numelec} 
}}
q_{\mathbf x, \mathbf z}
\Gamma_{\seqxz(\mathbf x, \mathbf z)},
\end{align}
where we define
\begin{align}
q_{\mathbf x, \mathbf z}
&=
2^{-\nummodes}
{(-i)}^{\binom{\numelec}{2}}
i^{\norm{\mathbf x}_1}
,
&
\seqxz(\mathbf x, \mathbf z)
&=
\Gamma_{\seqx(\mathbf x)}
\otimes \Gamma_{\seq(\mathbf z)}.
\end{align}
Using \cref{cor:shadow-norm-bound},
\begin{align}
&\norm{
{\left(
\ket{0} \bra{1}
\right)}^{\numelec}
{\left(
\ket{1} \bra{1}
\right)}^{\nummodes - \numelec}
}_{\mathcal U}^2
\\
&\leq 
\sum_{\substack{
\mathbf x, \mathbf x' \in {\{0, 1\}}^{\numelec} : \text{$\mathbf x \cdot \mathbf x'$ even}\\
\mathbf z, \mathbf z' \in {\{0, 1\}}^{\nummodes - \numelec} 
}}
\abs{
q_{\mathbf x, \mathbf z}
}
\abs{
q_{\mathbf x', \mathbf z'}
}
\kappa^{-1}_{\nummodes, \seqxz(\mathbf x, \mathbf z), \seqxz(\mathbf x', \mathbf z')}
\\
&=
2^{-2\nummodes}
\sum_{\substack{
\mathbf x, \mathbf x' \in {\{0, 1\}}^{\numelec} : \text{$\mathbf x \cdot \mathbf x'$ even}\\
\mathbf z, \mathbf z' \in {\{0, 1\}}^{\nummodes - \numelec} 
}}
\kappa^{-1}_{\nummodes, \seqxz(\mathbf x, \mathbf z), \seqxz(\mathbf x', \mathbf z')}
\\
&=
2^{\numelec-2\nummodes -1}
\sum_{a_1 = 0}^{\numelec / 2}
\binom{\numelec}{2a_1}
\sum_{k_2 = 0}^{\nummodes - \numelec}
\binom{\nummodes - \numelec}{k_2}
\sum_{a_2=0}^{k_2}
\binom{k_2}{a_2}
\sum_{b_2=0}^{\nummodes - \numelec - k_2}
\binom{\nummodes - \numelec - k_2}{b_2}
\kappa_{\nummodes, \frac{\numelec}{2} + k_2, \frac{\numelec}{2} + a_2 + b_2, a_1 + a_2}
\\
&=f(\nummodes, \numelec).
\end{align}
The bound $f(\nummodes, \numelec)$ is plotted in \cref{fig:projector-norm} for $\numelec \in \{0, \nummodes\}$,
together with the conjectured upper bounds
\begin{align}
f(\nummodes, \numelec) 
\overset{?}{\leq}
f(\nummodes, 0) 
&
\overset{?}{\leq}
F_0(\nummodes) = \frac12 \nummodes^{1/\sqrt{2}},\\
f(\nummodes, \nummodes)
&
\overset{?}{\leq}
F_1(\nummodes) = \frac12 \nummodes^{1/2}.
\end{align}
For $\nummodes \leq 50$, $f(\nummodes, \numelec)$ is monotonically decreasing with $\numelec$.

  \section{Smaller ensembles yielding the same channel}\label{sec:smaller-ensembles}

\begin{proof}[Proof of~\cref{thm:smaller-ensembles}]
Let $\mathcal F \in \perm(2\nummodes)$ be the set of $2^{\nummodes}$ permutations that swap elements only within each pair.
For each $\mathbf f \in \mathcal F$, $U(\mathbf f)$ is a product of single-qubit X gates.
For any permutation 
$\mathbf p \in \perm(2\nummodes)$ and 
$\mathbf f \in \mathcal F$ such that $U(\mathbf f) = X^{\otimes \mathbf x}$ 
\begin{align}
&
\sum_{\mathbf b \in {\{0, 1\}}^{\nummodes}}
\braket{ \mathbf b | U(\mathbf f \cdot \mathbf p)  \rho {U(\mathbf f \cdot \mathbf p)}^{\dagger}| \mathbf b}
{U(\mathbf f \cdot \mathbf p)}^{\dagger}    \ket{\mathbf b}\bra{\mathbf b} U(\mathbf f \cdot \mathbf p)
\\
&=
\sum_{\mathbf b \in {\{0, 1\}}^{\nummodes}}
\braket{ \mathbf b | 
U(\mathbf f) U(\mathbf p)  \rho 
{U(\mathbf p)}^{\dagger} {U(\mathbf f)}^{\dagger}| \mathbf b}
{U(\mathbf p)}^{\dagger} {U(\mathbf f)}^{\dagger}    
\ket{\mathbf b}\bra{\mathbf b} U(\mathbf f) U(\mathbf p)
\\
&=
\sum_{\mathbf b \in {\{0, 1\}}^{\nummodes}}
\braket{ \mathbf b \oplus \mathbf x| 
U(\mathbf p)  \rho 
{U(\mathbf p)}^{\dagger}| \mathbf b \oplus \mathbf x}
{U(\mathbf p)}^{\dagger}    
\ket{\mathbf b \oplus \mathbf x}\bra{\mathbf b \oplus \mathbf x} U(\mathbf p)
\\
&=
\sum_{\mathbf b \in {\{0, 1\}}^{\nummodes}}
\braket{ \mathbf b | U(\mathbf p)  \rho {U(\mathbf p)}^{\dagger}| \mathbf b}
{U(\mathbf p)}^{\dagger}    \ket{\mathbf b}\bra{\mathbf b} U(\mathbf p).
\end{align}

Let $\mathcal S \in \perm(2\nummodes)$ be the set of $\nummodes!$ permutations that swap adjacent pairs together.
For each $\mathbf s \in \mathcal S$, $U(\mathbf s)$ is  generated by standard SWAP gates.
In a slight abuse of notation, for a bitstring $\mathbf x \in {\{0, 1\}}^{\nummodes}$, let $\mathbf s(\mathbf x) \in {\{0, 1\}}^{\nummodes}$ be the corresponding swapped bitstring.
For any $\mathbf p \in \perm(2\nummodes)$ and 
$\mathbf s \in \mathcal S$,
\begin{align}
&
\sum_{\mathbf b \in {\{0, 1\}}^{\nummodes}}
\braket{ \mathbf b | U(\mathbf s \cdot \mathbf p)  \rho {U(\mathbf s \cdot \mathbf p)}^{\dagger}| \mathbf b}
{U(\mathbf s \cdot \mathbf p)}^{\dagger}    \ket{\mathbf b}\bra{\mathbf b} U(\mathbf s \cdot \mathbf p)
\\
&=
\sum_{\mathbf b \in {\{0, 1\}}^{\nummodes}}
\braket{ \mathbf b | 
U(\mathbf s) U(\mathbf p)  \rho 
{U(\mathbf p)}^{\dagger} {U(\mathbf s)}^{\dagger}| \mathbf b}
{U(\mathbf p)}^{\dagger} {U(\mathbf s)}^{\dagger}    
\ket{\mathbf b}\bra{\mathbf b} U(\mathbf s) U(\mathbf p)
\\
&=
\sum_{\mathbf b \in {\{0, 1\}}^{\nummodes}}
\braket{ \mathbf s^{-1} (\mathbf b) | 
U(\mathbf p)  \rho 
{U(\mathbf p)}^{\dagger}| \mathbf s^{-1}(\mathbf b)}
{U(\mathbf p)}^{\dagger}    
    \ket{\mathbf s^{-1}(\mathbf b)}\bra{\mathbf s^{-1}(\mathbf b)} U(\mathbf p)
\\
&=
\sum_{\mathbf b \in {\{0, 1\}}^{\nummodes}}
\braket{ \mathbf b | U(\mathbf p)  \rho {U(\mathbf p)}^{\dagger}| \mathbf b}
{U(\mathbf p)}^{\dagger}    \ket{\mathbf b}\bra{\mathbf b} U(\mathbf p).
\end{align}

Every permutation can be uniquely decomposed:
\begin{align}
\perm(2\nummodes) &=
\mathcal F \cdot \mathcal S \cdot \mathcal P
=
\left\{
\mathbf f \cdot \mathbf s \cdot \mathbf p
\middle| (\mathbf f, \mathbf s, \mathbf p) \in \mathcal F \times \mathcal S \times \mathcal P
\right\}.
\end{align}

Finally,
\begin{align}
\mathcal M_{\mathcal P}(\rho) &= 
\expectationover*{
\mathbf p \in \mathcal P
}{
\sum_{\mathbf b \in {\{0, 1\}}^{\nummodes}}
\braket{ \mathbf b | U(\mathbf p)  \rho {U(\mathbf p)}^{\dagger}| \mathbf b}
{U(\mathbf p)}^{\dagger}    \ket{\mathbf b}\bra{\mathbf b} U(\mathbf p)
}
\\
&=
\expectationover*{
\substack{
\mathbf f \in \mathcal F \\
\mathbf s \in \mathcal S \\
\mathbf p \in \mathcal P
}
}{
\sum_{\mathbf b \in {\{0, 1\}}^{\nummodes}}
\braket{ \mathbf b | 
U(\mathbf f \cdot \mathbf s \cdot \mathbf p)  \rho 
{U(\mathbf f \cdot \mathbf s \cdot \mathbf p)}^{\dagger}| \mathbf b}
{U(\mathbf f \cdot \mathbf s \cdot \mathbf p)}^{\dagger}    
\ket{\mathbf b}\bra{\mathbf b} U(\mathbf f \cdot \mathbf s \cdot \mathbf p)
}
\\
&=
\expectationover*{
\mathbf p \in \perm(2\nummodes)
}{
\sum_{\mathbf b \in {\{0, 1\}}^{\nummodes}}
\braket{ \mathbf b | U(\mathbf p)  \rho {U(\mathbf p)}^{\dagger}| \mathbf b}
{U(\mathbf p)}^{\dagger}    \ket{\mathbf b}\bra{\mathbf b} U(\mathbf p)
}
= \mathcal M(\rho).
\end{align}
\end{proof}
 \section{Learning a Slater determinant}\label{sec:slater-determinant}

In this section, we prove \cref{thm:slater-determinant-learning}, restated below.
The essence of the proof is that a Slater determinant is uniquely specified by its 1-RDMs.
However, given just copies of a Slater determinant, a learner can only approximately learn the 1-RDMs.
The technical work then is simply to give a procedure to extract a Slater determinant from the approximated 1-RDMs and to quantify how the approximation error in the estimated 1-RDMs affects the fidelity of the learned state with the target state.

\sdlearningthm*

\begin{proof}[Proof of \cref{thm:slater-determinant-learning}]
Suppose we are given samples of the Slater determinant $\ket{\psi}$.
By definition, there is some unitary $U$ such that 
$\ket{\psi}= b^{\dagger}_1 \cdots b_{\numelec}^{\dagger} \ket{\mathbf 0}$, where $b_i = \sum_{j=1}^{\nummodes} U_{i, j} a_j$.
Note that this unitary is not unique, and that
\begin{align}
a_i b_1^{\dagger} &= U_{1,i}^* - b_1^{\dagger} a_i, \\
a_i b_1^{\dagger} b_2^{\dagger}&= 
\left(U_{1,i}^* - b_1^{\dagger} a_i \right) b_2^{\dagger}
=
U_{1, i}^* b^{\dagger}_2 - U_{2, i}^* b_1^{\dagger} + b_1^{\dagger} b_2^{\dagger} a_i
,
\\
\vdots
\quad
&\hspace{10em}\vdots
\nonumber
\\
a_i b_1^{\dagger} \cdots b_{\numelec}^{\dagger} &=
-\sum_{j=1}^{\numelec} U_{j, i}^* {(-1)}^j b_1^{\dagger} \cdots b_{j-1}^{\dagger}  b_{j+1}^{\dagger} \cdots b_{\numelec}^{\dagger} 
+ {(-1)}^{\numelec} b_1^{\dagger} \cdots b_{\numelec}^{\dagger} a_i
.
\end{align}
Let
\begin{align}
\Pi_{\nummodes, \numelec} &= \sum_{l=1}^{\numelec} \ket{l}\bra{l}
\end{align}
be the $\nummodes \times \nummodes$ projector onto the first $\numelec$ entries and
let $R$ be the $\nummodes \times \nummodes$ (Hermitian) matrix of expectation values of the operators $a_i^{\dagger} a_j$ with entries
\begin{align}
&R_{j, i} 
= \Tr\left[a^{\dagger}_i a_j \ket{\psi}\bra{\psi}\right]
= \Tr\left[a^{\dagger}_i a_j b_1^{\dagger} \cdots b_{\numelec}^{\dagger} \ket{\mathbf 0} \bra{\mathbf 0} b_{\numelec} \cdots b_1\right]
= 
\braket{\mathbf 0| 
b_{\numelec} \cdots b_1
a^{\dagger}_i
a_j
b_1^{\dagger} \cdots b_{\numelec}^{\dagger} 
| \mathbf 0}
\\
&=
\bra{\mathbf 0} 
\left(
-\sum_{l=1}^{\numelec} U_{l, i} {(-1)}^l b_{\numelec} \cdots b_{l+1} b_{l-1} \cdots b_1
+ {(-1)}^{\numelec} a_i^{\dagger} b_{\numelec} \cdots b_1
\right)
\\
&\qquad\qquad
\left(
-\sum_{l'=1}^{\numelec} U_{l', j}^* {(-1)}^{l'} b_1^{\dagger} \cdots b_{l'-1}^{\dagger}  b_{l'+1}^{\dagger} \cdots b_{\numelec}^{\dagger} 
+ {(-1)}^{\numelec} b_1^{\dagger} \cdots b_{\numelec}^{\dagger} a_j
\right)
\ket{\mathbf 0}
\\
&=
\sum_{l, l'=1}^{\numelec} 
U_{l, i} 
U_{l', j}^*
{(-1)}^{l+l'} 
\braket{\mathbf 0| 
b_{\numelec} \cdots b_{l+1} b_{l-1} \cdots b_1
b_1^{\dagger} \cdots b_{l'-1}^{\dagger}  b_{l'+1}^{\dagger} \cdots b_{\numelec}^{\dagger} 
| \mathbf 0}
\\
&=
\sum_{l=1}^{\numelec} 
U_{l, i}
U_{l, j}^*
=
\sum_{l=1}^{\numelec} 
{\left(
U^{\dagger}
\right)}_{j, l}
U_{l, i}
=
{\left(
U^{\dagger}
\Pi_{\nummodes, \numelec}
U
\right)}_{j, i},
\end{align}
i.e.,
\begin{align}
R &= U^{\dagger} \Pi_{n,k } U.
\label{eq:R-def}
\end{align}
The matrix $R$ has all the information we need to uniquely specify $\ket{\psi}$, but we can't learn it exactly.
Noting that
\begin{align}
a^{\dagger}_i a_j
&= \frac14 \left[
i \Gamma_{(2i-1, 2j-1)} + i \Gamma_{(2i, 2j)} + \Gamma_{(2i, 2j-1)} - \Gamma_{(2i-1, 2j)}
\right],
\end{align}
we see that an additive $\epsilon_{\mathrm{shdw}}$-approximation to all degree-$2$ Majorana operators (i.e., the 1-RDMs) leads to an
additive $\epsilon_{\mathrm{shdw}}$-approximation to all entries of $R$.
Therefore, for any $\epsilon_{\mathrm{shdw}} > 0$ (to be set later) and with probability at least $1 - \delta$,
we can get an estimate $\estimate{R}$ such that 
$\left|\estimate{R}_{i, j} - R_{i, j}\right| \leq \epsilon_{\mathrm{shdw}}$ for all $1 \leq i, j \leq \nummodes$
using $O(\nummodes \log (\nummodes  / \delta) \epsilon_{\mathrm{shdw}}^{-2})$ samples of $\ket{\psi}$ by estimating the 1-RDMs according to \cref{thm:k-RDMs}.
Note that $\estimate{R}$ is Hermitian by construction.

The remainder of the proof shows how to use our estimate $\estimate{R}$ to deduce a classical description of a Slater determinant $\ket{\estimate{\psi}}$ such that ${\left|\braket{\psi | \estimate{\psi}}\right|}^2 \geq 1 - \epsilon_{\mathrm{fid}}$.
First, we find $\estimate{U}$ such that 
\begin{align}
\estimate{U} \estimate{R} \estimate{U}^{\dagger} &= \Lambda = \diag(\lambda_1 \cdots, \lambda_n),
\label{eq:U-est-def}
\end{align}
where $\lambda_1 \geq \lambda_2 \geq \cdots \geq \lambda_n \in \mathbb R$.\footnote{We will assume all classical arithmetic and linear algebra can be done with arbitrary precision.}
Because $\estimate{R}$ is Hermitian, there is always such a unitary $\estimate{U}$.
The rows of $\estimate{U}$ are the (conjugate transpose of the) eigenvectors of $\estimate{R}$, in decreasing order of the corresponding eigenvalues.
Let $\estimate{b}_i = \sum_{j=1}^{\nummodes} \estimate{U}_{i, j} a_j$.
Our estimate of the state will be
\begin{align}
\ket{\estimate{\psi}} &= \estimate{b}_1 \cdots \estimate{b}_{\numelec} \ket{\mathbf 0}.
\end{align}

To bound the fidelity between $\ket{\psi}$ and $\ket{\estimate{\psi}}$, we start by defining 
the errors
\begin{align}
E &= \estimate{R} - R,
&
D &= \Lambda - \Pi_{\nummodes, \numelec} = \diag(d_1, \ldots, d_{\nummodes}).
\end{align}
Note that $E$ is Hermitian, with entries at most $\epsilon_{\mathrm{shdw}}$ in magnitude, and $d_i \in \mathbb R$ for all $i \in [\nummodes]$.
With these, we can rewrite \cref{eq:U-est-def} as
\begin{align}
\estimate{U} R \estimate{U}^{\dagger} &=  \Pi_{\nummodes, \numelec} + D - \estimate{E},
&
&
\text{
where
}
&
\estimate{E} &= \estimate{U} E \estimate{U}^{\dagger}.
\end{align}
Plugging in \cref{eq:R-def}, we get
\begin{align}
\estimate{U} U^{\dagger} \Pi_{\nummodes, \numelec} U \estimate{U}^{\dagger}
&=  \Pi_{\nummodes, \numelec} + D - \estimate{E}.
\end{align}
Taking the first $\numelec$ rows and columns yields
\begin{align}
{\left(\estimate{U} U^{\dagger}\right)}_{[\numelec], [\numelec]}
{\left(\estimate{U} U^{\dagger}\right)}_{[\numelec], [\numelec]}^{\dagger}
&=  I + {\left(D - \estimate{E}\right)}_{[\numelec], [\numelec]}.
\label{eq:UU-IDE}
\end{align}
To finish, we will use the following lemma, whose proof is deferred to the end of the section.
It translates the approximation error of $\estimate{R}$ into a bound on $D - \estimate{E}$.
\begin{lemma}\label{lem:DE-norm}
For $R$ and $\estimate{R}$, let $U$, $\estimate{U}$, $D$, $\estimate{E}$ be defined as above.
If $\max_{i, j} \abs{R_{i, j} - \estimate{R}_{i, j}} \leq \epsilon_{\mathrm{shdw}} < 1 / (2 \nummodes^3)$, then
\begin{align}
\norm{D - \estimate{E}}
&\leq 
2 \nummodes^3 \epsilon_{\mathrm{shdw}}.
\end{align}
\end{lemma}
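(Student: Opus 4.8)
The plan is to use the algebraic identity isolated just before the lemma, namely $D - \estimate{E} = \estimate{U} R \estimate{U}^{\dagger} - \Pi_{\nummodes, \numelec}$, and to bound the two pieces $D$ and $\estimate{E}$ separately in operator norm. The crucial structural fact I would lean on is that $R = U^{\dagger} \Pi_{\nummodes, \numelec} U$ is an orthogonal projector of rank $\numelec$, so its eigenvalues lie in $\{0, 1\}$ and, sorted in decreasing order, are exactly the diagonal entries of $\Pi_{\nummodes, \numelec}$.

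First I would control the raw perturbation size. Since every entry of $E = \estimate{R} - R$ has magnitude at most $\epsilon_{\mathrm{shdw}}$ and $E$ is $\nummodes \times \nummodes$, bounding the maximum row and column sums (or simply $\norm{E} \leq \norm{E}_F$) gives $\norm{E} \leq \nummodes \epsilon_{\mathrm{shdw}}$. Because $\estimate{E} = \estimate{U} E \estimate{U}^{\dagger}$ is a unitary conjugate of $E$, the operator norm is preserved, so $\norm{\estimate{E}} = \norm{E} \leq \nummodes \epsilon_{\mathrm{shdw}}$.

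Next I would bound $\norm{D}$ purely at the level of eigenvalues. The diagonal entries of $D$ are $d_i = \lambda_i - {(\Pi_{\nummodes, \numelec})}_{ii}$, where $\lambda_1 \geq \cdots \geq \lambda_{\nummodes}$ are the sorted eigenvalues of $\estimate{R}$, and by the projector structure above ${(\Pi_{\nummodes, \numelec})}_{ii}$ is precisely the $i$-th sorted eigenvalue of $R$. Weyl's inequality applied to $\estimate{R} = R + E$ then yields $\abs{\lambda_i - {(\Pi_{\nummodes, \numelec})}_{ii}} \leq \norm{E}$ for every $i$, hence $\norm{D} = \max_i \abs{d_i} \leq \norm{E} \leq \nummodes \epsilon_{\mathrm{shdw}}$. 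Combining with the previous step by the triangle inequality gives $\norm{D - \estimate{E}} \leq \norm{D} + \norm{\estimate{E}} \leq 2\nummodes \epsilon_{\mathrm{shdw}}$. This is in fact stronger than the claimed $2\nummodes^3 \epsilon_{\mathrm{shdw}}$ and requires no upper bound on $\epsilon_{\mathrm{shdw}}$, so both the stated bound and the hypothesis $\epsilon_{\mathrm{shdw}} < 1/(2\nummodes^3)$ follow a fortiori.

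The reason this route is nearly effortless is worth flagging, since it pinpoints where the only genuine difficulty could live. Conjugating back, $D - \estimate{E} = \estimate{U}(R - \estimate{P})\estimate{U}^{\dagger}$, where $\estimate{P} = \estimate{U}^{\dagger} \Pi_{\nummodes, \numelec} \estimate{U}$ is the top-$\numelec$ eigenprojector of $\estimate{R}$, so $\norm{D - \estimate{E}} = \norm{R - \estimate{P}}$ is exactly the distance between the true occupied-orbital projector and the estimated one, a textbook Davis--Kahan situation with spectral gap $1$. The operator-norm argument above sidesteps any control of how far the estimated eigenspace rotates, because $\norm{\estimate{E}}$ is unitarily invariant and $\norm{D}$ is an eigenvalue statement. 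I expect the main obstacle to arise only if one instead pursues an explicit entrywise or eigenvector-perturbation expansion of $\estimate{U}$: there the degeneracy of $R$ and the need to keep the $\numelec$-th and $(\numelec+1)$-th eigenvalues of $\estimate{R}$ separated must be handled, and that is presumably where the looser polynomial factor $\nummodes^3$ and the smallness hypothesis $\epsilon_{\mathrm{shdw}} < 1/(2\nummodes^3)$ originate.
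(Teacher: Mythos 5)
Your proof is correct, and it establishes a strictly stronger conclusion ($\norm{D - \estimate{E}} \leq 2\nummodes\,\epsilon_{\mathrm{shdw}}$, with no smallness hypothesis on $\epsilon_{\mathrm{shdw}}$) by a genuinely different route for the key step. The paper bounds $\norm{\estimate{E}}$ exactly as you do, via the Frobenius norm, but it controls $\norm{D}$ with the Gershgorin circle theorem applied to $F = U\estimate{R}U^{\dagger} = \Pi_{\nummodes,\numelec} + UEU^{\dagger}$: it bounds each entry of $UEU^{\dagger}$ crudely by $\nummodes^2\epsilon_{\mathrm{shdw}}$ (using only $\abs{U_{j,l}}\leq 1$), hence each Gershgorin radius by $\epsilon_{\mathrm{EV}} = \nummodes^3\epsilon_{\mathrm{shdw}}$, and then invokes the counting clause of the theorem --- which requires the two clusters of discs around $1$ and $0$ to be disjoint, i.e.\ $\epsilon_{\mathrm{EV}} < 1/2$ --- to conclude that exactly $\numelec$ sorted eigenvalues of $\estimate{R}$ lie within $\epsilon_{\mathrm{EV}}$ of $1$ and the rest within $\epsilon_{\mathrm{EV}}$ of $0$, giving $\norm{D}\leq \nummodes^3\epsilon_{\mathrm{shdw}}$. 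Your observation that the sorted spectrum of the projector $R$ coincides with the diagonal of $\Pi_{\nummodes,\numelec}$ lets Weyl's inequality do this pairing automatically, yielding $\norm{D}\leq\norm{E}\leq\nummodes\,\epsilon_{\mathrm{shdw}}$ with no separation condition; your closing diagnosis of where the paper's $\nummodes^3$ and the hypothesis $\epsilon_{\mathrm{shdw}} < 1/(2\nummodes^3)$ originate is exactly right. The only thing the paper's argument ``buys'' is avoiding Weyl's theorem in favor of the (arguably more elementary) Gershgorin bound; the cost is two extra powers of $\nummodes$, which propagate directly into the $\nummodes^7$ in the sample complexity of \cref{thm:slater-determinant-learning} and would improve to $\nummodes^3$ under your bound (after re-tuning $\epsilon_{\mathrm{shdw}} = \epsilon_{\mathrm{fid}}/(3\numelec\nummodes)$).
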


Finishing up the proof of \cref{thm:slater-determinant-learning}, we get that the overlap of our estimate $\ket{\estimate{\psi}}$ with the target state $\ket{\psi}$ is 
\begin{flalign}
{\left|\braket{\estimate{\psi} | \psi}\right|}^2
&=
{\left|
\det\left({\left(\estimate{U} U^{\dagger}\right)}_{[\numelec], [\numelec]}\right)
\right|}^2
\\
&=
\det\left(
{\left(\estimate{U} U^{\dagger}\right)}_{[\numelec], [\numelec]}
{\left(\estimate{U} U^{\dagger}\right)}_{[\numelec], [\numelec]}^{\dagger}
\right)
\label{eq:fidelity}
\\
&= \det \left(I + {\left(D - \estimate{E}\right)}_{[\numelec], [\numelec]}\right)
&\text{by \cref{eq:UU-IDE}}
\\
&\geq 
\norm{I + {\left(D - \estimate{E}\right)}_{[\numelec], [\numelec]}}^{\numelec}
\\
&\geq
{\left(1 - \left\|{\left(D - \estimate{E}\right)}_{[\numelec], [\numelec]}\right\|\right)}^{\numelec}
&\forall M, 1 = \norm{(I + M) - M} \leq \norm{I + M} + \norm{M}
\label{eq:fidelity-det-bound}
\\
&\geq
{\left(1 - \left\|D - \estimate{E}\right\|\right)}^{\numelec}
&
\forall M,
\norm{\Pi_{\nummodes, \numelec} M \Pi_{\nummodes, \numelec}} 
\leq \norm{M}
\label{eq:fidelity-submultiplicativity}
\\
&
\geq 1 - \numelec \left\|D - \estimate{E}\right\|
&\text{by Bernoulli's inequality}
\label{eq:fidelity-bernoulli}
\\
&
\geq  1 - 2 \numelec \nummodes^3 \epsilon_{\mathrm{shdw}}.
&\text{by \cref{lem:DE-norm}}
\label{eq:fidelity-lem-DE-norm}
\end{flalign}
Note that \cref{eq:fidelity} depends only on the first $\numelec$ rows of $U$ and $\estimate{U}$ but not on their ordering, as expected.
Setting
\begin{align}
\epsilon_{\mathrm{shdw}} &= \epsilon_{\mathrm{fid}} / \left(3 \numelec \nummodes^3\right)
\end{align}
ensures that ${\left|\braket{\estimate{\psi} | \psi}\right|}^2 \geq 1 - \epsilon_{\mathrm{fid}}$
and also that 
$2 \nummodes^3 \epsilon_{\mathrm{shdw}} = \epsilon_{\mathrm{fid}}  / \numelec \leq 1/2$, satisfying the precondition of \cref{lem:DE-norm}.
The number of samples is 
$O(\nummodes \log(\nummodes / \delta) / \epsilon_{\mathrm{shdw}}^2) = 
O(\numelec^2 \nummodes^7 \log(\nummodes / \delta) / \epsilon_{\mathrm{fid}}^2)$.
\end{proof}

The proof of \cref{lem:DE-norm} will make use of the Gershgorin circle theorem.
If our estimate $\estimate{R}$ of $R$ were exact, then the eigenvalues of $\estimate{R}$ would be $0$ and $1$, as they are for $R$.
In a sense, the eigenvalue $1$ subspace is exactly what we want to learn.
When $\estimate{R}$ is only close to $R$, then each eigenvalue is close to $0$ or $1$.
The Gershgorin circle theorem, stated below, allows us to bound how much error in $\estimate{R}$ we can tolerate before the two subspaces bleed too much into each other.
\begin{theorem}[Gershgorin circle theorem~\cite{gershgorin1931ueber}]\label{thm:gershgorin-circle}
Let $A = {\left(a_{i, j}\right)}_{1 \leq i, j \leq n}$ be an $n \times n$ matrix.
For $i = 1, \ldots, n$, let $K_i$ be the circle with center $a_{i, i}$ and radius $\sum_{k\neq i} \left| a_{i, k}\right|$.
Then all eigenvalues of $A$ are contained in $\bigcup_{i=1}^n K_i$.
Furthermore, for $I \subset [n]$, if $\bigcup_{i \in I} K_i$ is disjoint from $\bigcup_{i \notin I} K_i$, then the former contains exactly $|I|$ eigenvalues.
\end{theorem}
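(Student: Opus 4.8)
The plan is to prove the two assertions separately: first the containment of the spectrum in $\bigcup_{i=1}^n K_i$, and then the refined counting statement for a separated sub-union. Both are classical, but it is worth laying the argument out carefully, since it is the counting half that \cref{lem:DE-norm} actually invokes.

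For the containment, I would argue directly from an eigenvector. Fix an eigenvalue $\lambda$ with eigenvector $v \neq 0$, and let $i$ be an index maximizing $\abs{v_i}$, so that $\abs{v_i} > 0$. The $i$th row of the eigenvalue equation $Av = \lambda v$ reads $\sum_j a_{i,j} v_j = \lambda v_i$, which I would rearrange to $(\lambda - a_{i,i}) v_i = \sum_{j \neq i} a_{i,j} v_j$. Taking absolute values and using $\abs{v_j} \leq \abs{v_i}$ for every $j$ gives $\abs{\lambda - a_{i,i}}\, \abs{v_i} \leq \big(\sum_{j \neq i} \abs{a_{i,j}}\big) \abs{v_i}$; dividing by $\abs{v_i} > 0$ yields $\abs{\lambda - a_{i,i}} \leq \sum_{j \neq i} \abs{a_{i,j}}$, i.e.\ $\lambda \in K_i$. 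Since $\lambda$ was arbitrary, every eigenvalue lies in some disc.

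For the counting statement, the plan is a homotopy (continuity) argument. Let $D = \diag(a_{1,1}, \ldots, a_{n,n})$ and interpolate via $A(t) = D + t(A - D)$ for $t \in [0,1]$, so $A(0) = D$ and $A(1) = A$. The key observation is that the Gershgorin disc of $A(t)$ centered at $a_{i,i}$ has radius $t \sum_{j \neq i} \abs{a_{i,j}}$, which is contained in $K_i$; hence by the containment just proved, every eigenvalue of $A(t)$ lies in $\bigcup_i K_i \subseteq G_I \cup G_{I^c}$, where $G_I = \bigcup_{i \in I} K_i$ and $G_{I^c} = \bigcup_{i \notin I} K_i$ are the two disjoint closed sets in the hypothesis. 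I would then track the number of eigenvalues of $A(t)$ (with multiplicity) lying in $G_I$ as a function of $t$. At $t = 0$ this count is exactly $\abs{I}$: the eigenvalues of $D$ are the centers $a_{i,i}$, and since $G_I \cap G_{I^c} = \emptyset$ we have $a_{i,i} \in G_I$ precisely when $i \in I$. The eigenvalues of $A(t)$ are the roots of the characteristic polynomial, whose coefficients are polynomials in $t$, so they vary continuously (as an unordered multiset) with $t$; and since $G_I$ and $G_{I^c}$ are compact and disjoint, they are separated by a positive distance. An eigenvalue can therefore never pass from $G_I$ to $G_{I^c}$ without leaving $G_I \cup G_{I^c}$, which it never does, so the integer-valued count is locally constant, hence constant on the connected interval $[0,1]$, and equals $\abs{I}$ at $t = 1$.

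The main obstacle is making ``the eigenvalues vary continuously and cannot jump between components'' fully rigorous. I would discharge it either by citing continuity of polynomial roots in their coefficients together with the positive-distance separation of $G_I$ and $G_{I^c}$, or, more cleanly, by the argument principle: choosing a contour $\gamma$ enclosing $G_I$ and disjoint from $G_{I^c}$, the count equals $\frac{1}{2\pi i} \oint_\gamma p_t'(\lambda)/p_t(\lambda)\, d\lambda$ with $p_t(\lambda) = \det(\lambda I - A(t))$; since no root of $p_t$ ever meets $\gamma$, this integral is continuous in $t$ and integer-valued, hence constant, giving the claim. I would present whichever formulation the surrounding exposition treats as standard.
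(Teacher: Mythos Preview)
Your proof is correct and is essentially the standard textbook argument for both parts of the theorem. However, there is nothing to compare against here: the paper does not prove \cref{thm:gershgorin-circle} at all. It is stated as a classical result with a citation to Gershgorin's original 1931 paper and then invoked as a black box inside the proof of \cref{lem:DE-norm}. So your proposal goes beyond what the paper does; if you want to include a proof, what you have written is fine, but strictly speaking the paper treats this as background and you are not expected to supply an argument.
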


\begin{proof}[Proof of \cref{lem:DE-norm}]

The diagonal elements of $\Lambda = \diag(\lambda_1, \ldots, \lambda_{\nummodes}) = \estimate{U} \estimate{R} \estimate{U}^{\dagger}$ are the eigenvalues of $\estimate{R}$, which are the eigenvalues of 
\begin{align}
    F &= 
    U \estimate{R} U^{\dagger}
    =
    U R U^{\dagger}
    +
    U E U^{\dagger}
    =
    \Pi_{\nummodes, \numelec} + U E U^{\dagger}
    .
\end{align}
For $i \in [\nummodes]$, let 
\begin{align}
r_i = \sum_{j\neq i} \left|{\left(U E U^{\dagger}\right)}_{i, j}\right|
\end{align}
be the radii around $F_{i, i}$ defining the circle $K_i$ of the Gershgorin circle theorem (\cref{thm:gershgorin-circle}).
Define
\begin{align}
\epsilon_{\mathrm{EV}}
=
\nummodes^3 \epsilon_{\mathrm{shdw}}
\geq 
\nummodes
\max_{j, k}
\left\{
\sum_{l, l'=1}^{\nummodes}
{\left|U_{j, l} E_{l, l'}  U_{k, l'}^*\right|}
\right\}
\geq
\nummodes
\max_{j, k}
\left|{\left(U E U^{\dagger}\right)}_{j, k}\right|
\geq
\max_{j}
\left\{
\sum_{k=1}^{\nummodes}
\left|{\left(U E U^{\dagger}\right)}_{j, k}\right|
\right\}
.
\end{align}
Then the individual Gershgorin discs are 
\begin{align}
&F_{i, i} - r_{i, i}  & & F_{i, i} + r_{i, i} \\
&= 
{\left(\Pi_{\nummodes, \numelec}\right)}_{i, i}
+ {\left(U E U^{\dagger}\right)}_{i, i}
-
\sum_{j \neq i} 
\left| {\left(U E U^{\dagger}\right)}_{i, j}\right|
&&= {\left(\Pi_{\nummodes, \numelec}\right)}_{i, i}
+ {\left(U E U^{\dagger}\right)}_{i, i}
+ \sum_{j \neq i} 
\left| {\left(U E U^{\dagger}\right)}_{i, j}\right|
\\
&\geq 
{\left(\Pi_{\nummodes, \numelec}\right)}_{i, i}
- \sum_{j =1}^{\nummodes}
\left| {\left(U E U^{\dagger}\right)}_{i, j}\right|
&&
\leq 
{\left(\Pi_{\nummodes, \numelec}\right)}_{i, i}
+ \sum_{j =1}^{\nummodes}
\left| {\left(U E U^{\dagger}\right)}_{i, j}\right|
\\
&\geq 
{\left(\Pi_{\nummodes, \numelec}\right)}_{i, i}
- \epsilon_{\mathrm{EV}},
&&
\leq
{\left(\Pi_{\nummodes, \numelec}\right)}_{i, i}
+ \epsilon_{\mathrm{EV}}.
\end{align}
and the unions of the first $\numelec$ and last $\nummodes - \numelec$ respectively satisfy
\begin{align}
\bigcup_{i=1}^{\numelec} K_i 
&\subseteq 
\left[
\min_{1 \leq i \leq \numelec} \left(F_{i, i} - r_i \right),
\max_{1 \leq i \leq \numelec} \left(F_{i, i} + r_i \right)
\right]
\subseteq
\left[1 - \epsilon_{\mathrm{EV}}, 1 + \epsilon_{\mathrm{EV}}\right]
,
\\
\bigcup_{i=\numelec+1}^{\nummodes} K_i 
&\subseteq 
\left[
\min_{\numelec + 1 \leq i \leq \nummodes} \left(F_{i, i} - r_i \right),
\max_{\numelec + 1 \leq i \leq \nummodes} \left(F_{i, i} + r_i \right)
\right]
\subseteq
\left[-\epsilon_{\mathrm{EV}}, \epsilon_{\mathrm{EV}}\right].
\end{align}
By supposition, $\epsilon_{\mathrm{EV}} < 1/2 < 1 - \epsilon_{\mathrm{EV}}$, and so these two regions are distinct.
Therefore, by the Gershgorin circle theorem, 
there are $\numelec$ eigenvalues of $\estimate{R}$ in ${[1-\epsilon_{\mathrm{EV}}, 1+\epsilon_{\mathrm{EV}}]}$ and $\nummodes - \numelec$ eigenvalues in $[-\epsilon_{\mathrm{EV}}, \epsilon_{\mathrm{EV}}]$.
Therefore, 
\begin{align}
\left\|D\right\| &=
\left\| \Lambda - \Pi_{\nummodes, \numelec}\right\|
\leq \epsilon_{\mathrm{EV}}
=
\nummodes^3 \epsilon_{\mathrm{shdw}}.
\end{align}
We also have
\begin{align}
\norm{\estimate{E}}
&\leq
\sqrt{\sum_{i, j=1}^{\nummodes} {\left|\estimate{E}_{i, j}\right|}^2}
\leq
\sqrt{\nummodes^2 \epsilon_{\mathrm{shdw}}^2}
=\nummodes \epsilon_{\mathrm{shdw}},
\end{align}
where we used the fact that the Frobenius norm upper bounds the operator norm.
Finally,
\begin{align}
\\
\left\|
D - \estimate{E}
\right\|
&\leq
\left\|
D
\right\|
+
\left\|
\estimate{E}
\right\|
\leq  2\nummodes^3 \epsilon_{\mathrm{shdw}}.
\end{align}
\end{proof}
 
\printbibliography[heading=bibintoc]

\end{refsection}
\end{document}